\newtheorem{lemma}{\hspace{-11pt}\bf Lemma}
\newtheorem{proposition}{\hspace{-11pt}\bf Proposition}
\newtheorem{theorem}{\hspace{-11pt}\bf Theorem}
\newtheorem{remark}{\hspace{-11pt}\bf Remark}
\long\def\symbolfootnote[#1]#2{\begingroup
\def\thefootnote{\fnsymbol{footnote}}
\footnote[#1]{#2}\endgroup}
\begin{document}
\title{Two-Scale Stochastic Control for Integrated Multipoint Communication Systems with Renewables}
\author{Xin Wang,~\IEEEmembership{Senior Member, IEEE}, Xiaojing Chen,~\IEEEmembership{Student Member, IEEE},
Tianyi Chen,~\IEEEmembership{Student Member, IEEE}, Longbo Huang,~\IEEEmembership{Member, IEEE}, and Georgios B. Giannakis,~\IEEEmembership{Fellow, IEEE}\\
\thanks {Work in this paper was supported by the Program for New Century Excellent Talents in University, the Innovation Program of Shanghai Municipal Education Commission; US NSF grants ECCS-1509005, ECCS-1508993, CCF-1423316, CCF-1442686, and ECCS-1202135. The work of L. Huang was supported in part by the National Basic Research Program of China Grant 2011CBA00300, 2011CBA00301, the National Natural Science Foundation of China Grant 61033001, 61361136003, 61303195, and the China youth 1000-talent grant.}

\thanks{
X. Wang and X. Chen are with the Key Laboratory for Information Science of Electromagnetic Waves (MoE), Department of Communication Science and Engineering, Fudan University, 220 Han Dan Road, Shanghai, China. Emails:~\{xwang11, 13210720095\}@fudan.edu.cn. X.~Wang is also with the Department of Computer and Electrical Engineering and Computer Science, Florida Atlantic University, Boca Raton, FL 33431 USA.

T. Chen and G. B. Giannakis are with the Department of Electrical and Computer Engineering and the Digital Technology Center, University of Minnesota, Minneapolis, MN 55455 USA. Emails:~\{chen3827, georgios\}@umn.edu.

L. Huang is with the Institute for Interdisciplinary Information Sciences, Tsinghua University, Beijing, China. Email: longbohuang@tsinghua.edu.cn.

X. Wang, X. Chen, and T. Chen contributed equally to this work.}
}

\markboth{IEEE TRANSACTION ON SMART GRID (to appear)}{}

\maketitle

\begin{abstract}
Increasing threats of global warming and climate changes call for an energy-efficient and sustainable design of future wireless communication systems. To this end, a novel two-scale stochastic control framework is put forth for smart-grid powered coordinated multi-point (CoMP) systems. Taking into account renewable energy sources (RES), dynamic pricing, two-way energy trading facilities and imperfect energy storage devices, the energy management task is formulated as an infinite-horizon optimization problem minimizing the time-averaged energy transaction cost, subject to the users' quality of service (QoS) requirements. Leveraging the Lyapunov optimization approach as well as the stochastic subgradient method, a two-scale online control (TS-OC) approach is developed for the resultant smart-grid powered CoMP systems. Using only historical data, the proposed TS-OC makes online control decisions at two timescales, and features a provably feasible and asymptotically near-optimal solution. Numerical tests further corroborate the theoretical analysis, and demonstrate the merits of the proposed approach.
\end{abstract}

\begin{keywords}
Two-scale control, ahead-of-time market, real-time market, battery degeneration, CoMP systems, smart grids, Lyapunov optimization.
\end{keywords}

\section{Introduction}

Interference is a major obstacle in wireless communication systems due to their broadcast nature, and becomes more severe in next-generation spectrum- and energy-constrained cellular networks with smaller cells and more flexible frequency reuse~\cite{Hwa13}. With ever increasing demand for energy-efficient transmissions, coordinated multi-point processing (CoMP) has been proposed as a promising paradigm for efficient inter-cell interference management in heterogeneous networks (HetNets)~\cite{Irm11}. In CoMP systems, base stations (BSs) are partitioned into clusters, where BSs per cluster perform coordinated beamforming to serve the users~\cite{Dal10, Zha09, Ng10}.
As the number of BSs in HetNets increases, their electricity consumption constitutes a major part of their operational expenditure, and contributes a considerable portion to the global \emph{carbon footprint}~\cite{Oh11}.
Fortunately, emerging characteristics of smart grids offer ample opportunities to achieve both energy-efficient and environmentally-friendly communication solutions. Such characteristics include high penetration of renewable energy sources (RES), two-way energy trading, and dynamic pricing based demand-side management (DSM)~\cite{Zha13, Liu10, Gian13}.
In this context, energy-efficient ``green'' communication solutions have been proposed for their economic and ecological merits \cite{Dal10, Zha09, Oh11, Ng10}. Driven by the need of sustainable ``green communications,'' manufacturers and network operators such as Ericsson, Huawei, Vodafone and China Mobile have started developing ``green'' BSs that can be jointly supplied by the persistent power sources from the main electric grid as well as from harvested renewable energy sources (e.g., solar and wind) \cite{Huawei, Li15}. It is expected that renewable powered BSs will be widely deployed to support future-generation cellular systems.

A few recent works have considered the smart-grid powered CoMP transmissions~\cite{Xu13, Xu15, WangTWC15, WangJSAC15}.
Assuming that the energy harvested from RES is accurately available \emph{a priori} through e.g., forecasting, \cite{Xu13} and~\cite{Xu15} considered the energy-efficient resource allocation for RES-powered CoMP downlinks.
Building on realistic models, our recent work dealt with robust energy management and transmit-beamforming designs that minimize the worst-case energy transaction cost for the CoMP downlink with RES and DSM \cite{WangTWC15}. Leveraging novel stochastic optimization tools \cite{Urg11, Lak14, Sun14}, we further developed an efficient approach to obtain a feasible and asymptotically optimal online control scheme for smart-grid powered CoMP systems, without knowing the distributions of involved random variables \cite{WangJSAC15}.

A salient assumption in \cite{Xu13, Xu15, WangTWC15, WangJSAC15} is that all involved resource allocation tasks are performed in a single time scale. However, RES and wireless channel dynamics typically evolve over different time scales in practice. 
Development of two-scale control schemes is then well motivated for CoMP systems with RES. In related contexts, a few stochastic optimization based two-scale control schemes were recently proposed and analyzed in~\cite{Yao12, Deng13, Prasad14, Yu16}. Extending the traditional Lyapunov optimization approach \cite{Urg11, Lak14, Sun14}, \cite{Yao12} introduced a two-scale control algorithm that makes distributed routing and server management decisions to reduce power cost for large-scale data centers. Based on a similar approach, \cite{Deng13} developed a so-called MultiGreen algorithm for data centers, which allows cloud service providers to make energy transactions at two time scales for minimum operational cost. As far as wireless communications are concerned, \cite{Prasad14} performed joint precoder assignment, user association, and channel resource scheduling for HetNets with non-ideal backhaul; while \cite{Yu16} introduced a two-timescale approach for network selection and subchannel allocation for integrated cellular and Wi-Fi networks with an emphasis on using predictive future information. Note that however, neither \cite{Prasad14} nor \cite{Yu16} considers the diversity of energy prices in fast/slow-timescale energy markets, and the energy leakage effects in the energy management task.

\section{}
In the present paper, we develop a two-scale online control (TS-OC) approach for smart-grid powered CoMP systems considering RES, dynamic pricing, two-way energy trading facilities and imperfect energy storage devices. Suppose that the RES harvesting occurs at the BSs over a slow timescale relative to the coherence time of wireless channels. The proposed scheme performs an ahead-of-time (e.g., 15-minute ahead, or, hour-ahead) energy planning upon RES arrivals, while deciding real-time energy balancing and transmit-beamforming schedules per channel coherence time slot. Specifically, the TS-OC determines the amount of energy to trade (purchase or sell) with the ahead-of-time wholesale market based on RES generation, as the basic energy supply for all the time slots within a RES harvesting interval. On the other hand, it decides the amount of energy to trade with the real-time market, energy charging to (or discharging from) the batteries, as well as the coordinated transmit-beamformers to guarantee the users' quality of service (QoS) per time slot.
Generalizing the Lyapunov optimization techniques in \cite{Yao12,Deng13,Prasad14, Yu16,Qin15}, we propose a synergetic framework to design and analyze such a two-scale dynamic management scheme to minimize the long-term time-averaged energy transaction cost of the CoMP transmissions, without knowing the distributions of the random channel, RES, and energy price processes. The main contributions of our work are summarized as follows.
\begin{itemize}
\item Leveraging the ahead-of-time and real-time electricity markets, and building on our generalized system models in \cite{WangTWC15, WangJSAC15}, a novel two-scale optimization framework is developed to facilitate the dynamic resource management for smart-grid powered CoMP systems with RES and channel dynamics at different time scales.

\item While \cite{WangJSAC15,Yao12} and \cite{Deng13} do not account for battery degeneration (energy leakage), we integrate the modified Lyapunov optimization technique into the two-scale stochastic optimization approach to leverage the diversity of energy prices along with the energy leakage effects on the dynamic energy management task.

\item Using only past channel and energy-price realizations, a novel stochastic subgradient approach is developed to solve the ahead-of-time energy planning (sub-)problem, which is suitable for a general family of continuous distributions, and avoids constructing the histogram estimate which is computationally cumbersome, especially for high-dimensional vector of random optimization variables.

\item Rigorous analysis is presented to justify the feasibility and quantify the optimality gap for the proposed two-scale online control algorithm.
\end{itemize}

The rest of the paper is organized as follows. The system models are described in Section II. The proposed dynamic resource management scheme
is developed in Section III. Performance analysis is the subject of Section IV. Numerical tests are provided in Section V,
followed by concluding remarks in Section VI.

\noindent \emph{Notation}. Boldface lower (upper) case letters represent vectors (matrices); $\mathbb{C}^{N}$ and $\mathbb{R}^{N \times M}$ stand for spaces of $N \times 1$ complex vectors and $N \times M$ real matrices, respectively; $(\cdot)'$ denotes transpose, and $(\cdot)^H$ conjugate transpose; $\text{diag}(a_1, \ldots, a_M)$ denotes a diagonal matrix with diagonal elements $a_1, \ldots, a_M$;
$| \cdot |$ the magnitude of a complex scalar; and $\mathbb{E}$ denotes expectation.

\section{System Models}


Consider a cluster-based CoMP downlink setup, where a set ${\cal I}:=\{1,\ldots, I\}$ of distributed BSs (e.g., macro/micro/pico BSs) is selected to serve a set ${\cal K}:=\{1,\ldots, K\}$ of mobile users, as in e.g., \cite{WangTWC15, WangJSAC15}. Each BS is equipped with $M\geq 1$ transmit antennas, whereas each user has a single receive antenna. Suppose that through the smart-grid infrastructure conventional power generation is available, but each BS can also harvest RES (through e.g., solar panels and/or wind turbines), and it has an energy storage device (i.e., battery) to save the harvested energy. Relying on a two-way energy trading facility, the BS can also buy energy from or sell energy to the main grid at dynamically changing market prices. For the CoMP cluster, there is a low-latency backhaul network connecting the set of BSs to a central controller~\cite{Zha09}, which coordinates energy trading as well as cooperative communication. This central entity can collect both communication data (transmit messages, channel state information) from each BS through the cellular backhaul links, as well as the energy information (energy purchase/selling prices, energy queue sizes) via smart meters installed at BSs, and the grid-deployed communication/control links connecting them.\footnote{Perfect channel state information will be assumed hereafter, but the proposed formulation can readily account for the channel estimation errors to robustify the beamforming design; see e.g., \cite{WangTWC15, WangJSAC15}. In addition, generalizations are possible to incorporate imperfect energy queue information based on the Lyapunov optimization framework in \cite{Deng13}. Although their detailed study falls out of the present paper's scope, such imperfections are not expected to substantially affect the effectiveness of the proposed scheme. }

\begin{figure}\label{sec:model}
\centering
\includegraphics[width=0.5\textwidth]{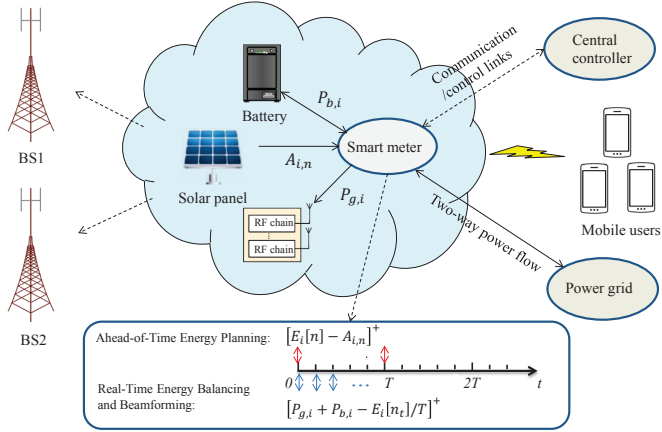}
\vspace{-0.6cm}
\caption{A smart grid powered CoMP system.
Two BSs with local renewable energy harvesting and storage devices implement two-way energy trading with the main grid.
}
\vspace{-0.6cm}
\label{fig:system}
\end{figure}


As the RES and wireless channel dynamics emerge typically at different time scales in practice, we propose a two-scale control mechanism. As shown in Fig.~\ref{fig:system}, time is divided in slots of length smaller than the coherence time of the wireless channels. For convenience, the slot duration is normalized to unity; thus, the terms ``energy'' and ``power'' can be used interchangeably. On the other hand, we define the (virtual) ``coarse-grained'' time intervals in accordance with the slow RES harvesting scale, with each coarse-grained interval consisting of $T$ time slots.

\subsection{Ahead-of-Time Energy Planning}

At the beginning of each ``coarse-grained'' interval, namely at time $t=nT$, $n=1, 2, \ldots$, let $A_{i,n}$ denote the RES amount collected per BS $i \in {\cal I}$, and $\boldsymbol{A}_n:=[A_{1,n},\ldots,A_{I,n}]'$. With $\boldsymbol{A}_n$ available, an energy planner at the central unity decides the energy amounts $E_i[n]$, $\forall i$, to be used in the next $T$ slots per BS $i$. With a two-way energy trading facility, the BSs then either purchase energy from the main grid according to their shortage, or sell their surplus energy to the grid at a fair price in order to reduce operational costs. Specifically, following the decision, BS $i$ contributes its RES amount $A_{i,n}$ to the main grid and requests the grid to supply an average energy amount of $E_i[n]/T$ per slot $t=nT, \ldots, (n+1)T-1$.

RES is assumed harvested for free after deployment. Given the requested energy $E_i[n]$ and the harvested energy $A_{i,n}$, the shortage energy that is purchased from the grid for BS $i$ is clearly $[E_i[n]-A_{i,n}]^{+}$; or, the surplus energy that is sold to the grid is $[A_{i,n}-E_i[n]]^{+}$, where $[a]^+:= \max\{a, 0\}$.
Depending on the difference $(E_i[n]-A_{i,n})$, the BS $i$ either buys electricity from the grid with the ahead-of-time (i.e., long-term) price $\alpha_n^{{\rm lt}}$, or sells electricity to the grid with price $\beta_n^{{\rm lt}}$ for profit (the latter leads to a negative cost).
Notwithstanding, we shall always set $\alpha_n^{{\rm lt}} > \beta_n^{{\rm lt}}$  to avoid meaningless buy-and-sell activities of the BSs for profit. The transaction cost with BS~$i$ for such an energy planning is therefore given by
\begin{equation}
\label{eq.lt-tranct}
G^{{\rm lt}}(E_i[n]):= \alpha_n^{{\rm lt}}[E_i[n]\!-\!A_{i,n}]^{+} \!-\! \beta_n^{{\rm lt}} [A_{i,n}\!-\!E_i[n]]^{+}.
\end{equation}
For conciseness, we concatenate into a single random vector all the random variables evolving at this slow timescale; i.e., $\boldsymbol{\xi}_n^{\rm lt}:=\{\alpha_n^{\rm lt}, \beta_n^{\rm lt},\mathbf{A}_n, \forall n\}$.

\subsection{CoMP Downlink Transmissions}

Per slot $t$, let $\mathbf{h}_{ik,t} \in \mathbb{C}^{M}$ denote the vector channel from BS $i$ to user $k$, $\forall i \in {\cal I}$, $\forall k \in \mathcal{K}$; let $\mathbf{h}_{k,t} := [\mathbf{h}_{1k,t}', \ldots, \mathbf{h}_{Ik,t}']'$ collect the channel vectors from all BSs to user $k$, and $\mathbf{H}_t:=[\mathbf{h}_{1,t}. \ldots, \mathbf{h}_{K,t}]$. With linear transmit beamforming performed across BSs, the vector signal transmitted  to user $k$ is: $\mathbf{q}_k(t) =\mathbf{w}_k(t) s_k(t)$, $\forall k$, where $s_k(t)$ denotes the information-bearing scalar symbol with unit-energy, and $\mathbf{w}_k(t) \in \mathbb{C}^{MI}$ denotes the beamforming vector across the BSs serving user $k$. The received vector at slot $t$ for user $k$ is therefore
\begin{align}\label{eq.yk}
y_k(t) =\mathbf{h}_{k,t}^H \mathbf{q}_k(t) + \sum_{l\neq k} \mathbf{h}_{k,t}^H \mathbf{q}_l(t) + n_k(t)
\end{align}
where $\mathbf{h}_{k,t}^H \mathbf{q}_k(t)$ is the desired signal of user $k$, $\sum_{l\neq k} \mathbf{h}_{k,t}^H \mathbf{q}_l(t)$ is the inter-user interference from the same cluster, and $n_k(t)$ denotes additive noise, which may also include the downlink interference from other BSs outside user $k$'s cluster. It is assumed that $n_k(t)$ is a circularly symmetric complex Gaussian (CSCG) random variable with zero mean and variance $\sigma_k^2$.

The signal-to-interference-plus-noise ratio (SINR) at user $k$ can be expressed as
\begin{equation}\label{sinr}
    \text{SINR}_k(\{\mathbf{w}_k(t)\})= \frac{|\mathbf{h}_{k,t}^H \mathbf{w}_k(t)|^2}{\sum_{l\neq k} (|\mathbf{h}_{k,t}^H \mathbf{w}_l(t)|^2) + \sigma_k^2}~.
\end{equation}
The transmit power at each BS $i$ clearly is given by
\begin{equation}
    P_{x,i}(t) = \sum_{k \in \mathcal{K}} {\mathbf{w}_k^H(t)} \mathbf{B}_i \mathbf{w}_k(t)
\end{equation}
where the matrix $$\mathbf{B}_i:=\text{diag}\left(\underbrace{0, \ldots, 0}_{(i-1)M}, \underbrace{1, \ldots, 1}_{M}, \underbrace{0, \ldots, 0}_{(I-i)M}\right) \in \mathbb{R}^{MI\times MI}$$ selects the corresponding rows out of $\{\mathbf{w}_k(t)\}_{k\in \mathcal{K}}$ to form the $i$-th BS's transmit-beamforming vector of size $M\times 1$.

To guarantee QoS per slot user $k$, it is required that the central controller selects a set of $\{\mathbf{w}_k(t)\}$ satisfying [cf. \eqref{sinr}]
\begin{equation}\label{eq.sinr}
    \text{SINR}_k (\{\mathbf{w}_k(t)\}) \geq \gamma_k, \quad \forall k
\end{equation}
where $\gamma_k$ denotes the target SINR value per user $k$.

\subsection{Real-Time Energy Balancing}

For the $i$-th BS, the total energy consumption $P_{g,i}(t)$ per slot $t$ includes the transmission-related power $P_{x,i}(t)$, and the rest that is due to other components such as air conditioning, data processor, and circuits, which can be generally modeled as a constant power, $P_c > 0$ \cite{Xu15}. We further suppose that $P_{g,i}(t)$ is bounded by $P_g^{\max}$. Namely,
\begin{equation}\label{eq.pg}
    P_{g,i}(t)=P_c+\sum_{k \in \mathcal{K}} {\mathbf{w}_k^H(t)} \mathbf{B}_i \mathbf{w}_k(t) \leq P_g^{\max}, \quad \forall i.
\end{equation}

\begin{figure}
\centering
\includegraphics[width=0.5\textwidth]{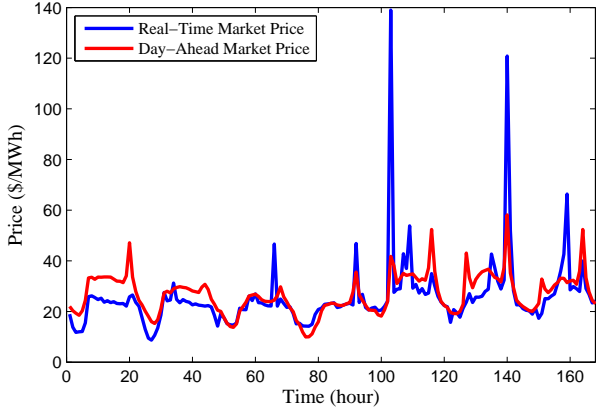}
\caption{Hourly price trend for day-ahead and real-time electricity markets during Oct. 01-07, 2015 \cite{PJM}.}
\label{fig: Price}
\vspace{-0.3cm}
\end{figure}

Per slot $t$, the energy supply available from the ahead-of-time planning may not exactly meet the actual demand at BS~$i$. Hence, the BS $i$ is also allowed to perform real-time energy trading with the main grid to balance its supply with demand. Let $P_i(t)$ denote the real-time energy amount that is purchased from ($P_i(t) >0$) or sold to ($P_i(t) <0$) the grid by BS $i$. Let $\alpha_t^{\rm rt}$ and $\beta_t^{{\rm rt}}$ ($\alpha_t^{\rm rt} > \beta_t^{{\rm rt}}$) denote the real-time energy purchase and selling prices, respectively. Then the real-time energy transaction cost for BS $i$ is
\begin{equation}
\label{eq.rt-tranct}
G^{{\rm rt}}(P_i(t)):= \alpha_t^{\rm rt}[P_i(t)]^{+} - \beta_t^{{\rm rt}} [-P_i(t)]^{+}.
\end{equation}
Fig.~\ref{fig: Price} depicts the day-ahead and real-time energy prices in the Pennsylvania-Jersey-Maryland (PJM) wholesale market~\cite{PJM}. In practice, the average purchase price in the real-time market tends to be no lower than that in the day-ahead market; that is, $\mathbb{E}\{\alpha_t^{\rm rt}\} \geq \mathbb{E}\{\alpha_n^{{\rm lt}}\}$; similarly, we have $\mathbb{E}\{\beta_t^{{\rm rt}}\} \leq \mathbb{E}\{\beta_n^{{\rm lt}}\}$. Again, we use a random vector $\boldsymbol{\xi}_t^{\rm rt}:=\{\alpha_t^{\rm rt}, \beta_t^{{\rm rt}},\mathbf{H}_t, \forall t\}$ to collect all random variables evolving at the fast timescale.

\subsection{Energy Storage with Degeneration}

As energy consumption will become a major concern of the future large-scale cellular networks, uninterrupted power supply type storage units can be installed at the BSs to prevent power outages, and provide opportunities to optimize the BSs' electricity bills. 
Different from the ideal battery models in \cite{Xu13, Xu15, WangTWC15, WangJSAC15, Deng13}, we consider here a practical battery with degeneration (i.e., energy leakage over time even in the absence of discharging) as in \cite{Qin15}.

For the battery of the $i$-th BS, let $C_i(0)$ denote the initial amount of stored energy, and $C_i(t)$ its state of charge (SoC) at the beginning of time slot $t$. The battery is assumed to have a finite capacity $C^{\max}$. Furthermore, for reliability purposes, it might be required to ensure that a minimum energy level $C^{\min}$ is maintained at all times. Let $P_{b,i}(t)$ denote the energy delivered to or drawn from the battery at slot $t$, which amounts to either charging ($P_{b,i}(t)>0$) or discharging ($P_{b,i}(t)<0$). The stored energy then obeys the dynamic equation
\begin{equation}\label{eq.Ci}
    C_i(t+1) = \eta C_i(t)+P_{b,i}(t), ~ C^{\min} \leq C_i(t) \leq C^{\max}, ~ \forall i
\end{equation}
where $\eta \in (0,1]$ denotes the storage efficiency (e.g., $\eta =0.9$ means that 10\% of the stored energy will be ``leaked'' over a slot, even in the absence of discharging).

The amount of power (dis)charged is also assumed bounded by
\begin{equation}\label{eq.pb}
     P_{b}^{\min} \leq P_{b,i}(t) \leq P_{b}^{\max}, \quad \forall i
\end{equation}
where $P_{b}^{\min} <0$ and $P_{b}^{\max}>0$ are introduced by physical constraints.

With $n_t := \lfloor \frac{t}{T} \rfloor$ and consideration of $P_{b,i}(t)$, we have the following demand-and-supply balance equation per slot $t$:
\begin{equation}\label{eq.balance}
   P_c+\sum_{k \in \mathcal{K}} {\mathbf{w}_k^H(t)} \mathbf{B}_i \mathbf{w}_k(t) + P_{b,i}(t) = \frac{E_i[n_t]}{T} + P_i(t), \;\forall i.
\end{equation}


\section{Dynamic Resource Management Scheme}

Note that the harvested RES amounts $\{\boldsymbol{A}_n, \forall n\}$, the ahead-of-time prices $\{\alpha_n^{{\rm lt}}, \beta_n^{{\rm lt}}, \forall n\}$, the real-time prices $\{\alpha_t^{\rm rt}, \beta_t^{{\rm rt}}, \forall t\}$, and the wireless channel matrices $\{\mathbf{H}_t, \forall t\}$ are all random. The smart-grid powered CoMP downlink to be controlled is a stochastic system. The goal is to design an online resource management scheme that chooses the ahead-of-time energy-trading amounts $\{E_i[n], \forall i\}$ at every $t=nT$, as well as the real-time energy-trading amounts $\{P_i(t), \forall i\}$, battery (dis)charging amounts $\{P_{b,i}(t), \forall i\}$, and the CoMP beamforming vectors $\{\mathbf{w}_k(t),\forall k\}$ per slot $t$, so as to minimize the expected total energy transaction cost, without knowing the distributions of the aforementioned random processes.

According to (\ref{eq.lt-tranct}) and (\ref{eq.rt-tranct}), define the energy transaction cost for BS $i$ per slot $t$ as:
\begin{equation}\label{eq.gt}
    \Phi_i(t):=\frac{1}{T} G^{{\rm lt}}(E_i[n_t]) + G^{{\rm rt}}(P_i(t)).
\end{equation}
Let ${\cal X}:=\{E_i[n], \forall i,n; P_i(t), P_{b,i}(t), C_i(t), \forall i, t; \mathbf{w}_k(t), \forall k, t\}$. The problem of interest is to find
\begin{equation}\label{eq.prob}
\begin{split}
   \Phi^{opt} := & \min_{{\cal X}} \;\lim_{N\rightarrow \infty} \frac{1}{NT} \sum_{t=0}^{NT-1} \sum_{i \in {\cal I}} \mathbb{E} \{\Phi_i(t)\} \\
   & \text{subject to} ~~~ (\ref{eq.sinr}), (\ref{eq.pg}), (\ref{eq.Ci}), (\ref{eq.pb}), (\ref{eq.balance}), ~~ \forall t
\end{split}
\end{equation}
where the expectations of $\Phi_i(t)$ are taken over all sources of randomness.
Note that here the constraints (\ref{eq.sinr}), (\ref{eq.pg}), (\ref{eq.Ci}), (\ref{eq.pb}), and (\ref{eq.balance}) are implicitly required to hold for every realization of the underlying random states $\boldsymbol{\xi}_t^{\rm rt}$ and $\boldsymbol{\xi}_n^{\rm lt}$.

\subsection{Two-Scale Online Control Algorithm}

(\ref{eq.prob}) is a stochastic optimization task. We next generalize and integrate the Lyapunov optimization techniques in \cite{Yao12,Deng13, Prasad14,Yu16,Qin15} to develop a TS-OC algorithm, which will be proven feasible, and asymptotically near-optimal for (\ref{eq.prob}). To start, assume the following two relatively mild conditions for the system parameters:
\begin{align}
    & P_b^{\max} \geq (1-\eta) C^{\min}  \label{eq.A1} \\
    & C^{\max}- C^{\min} \geq \frac{1-\eta^T}{1-\eta} (P_b^{\max} - P_b^{\min}). \label{eq.A2}
\end{align}

Condition (\ref{eq.A1}) simply implies that the energy leakage of the battery can be compensated by the charging. Condition (\ref{eq.A2}) requires that the allowable SoC range is large enough to accommodate the largest possible charging/discharging over $T$ time slots of each coarse-grained interval. This then makes the system ``controllable'' by our two-scale mechanism.

Our algorithm depends on two parameters, namely a ``queue perturbation'' parameter $\Gamma$, and a weight parameter $V$. Define $\bar{\alpha}:=\max\{\alpha_t^{\rm rt}, \forall t\}$ and $\underline{\beta}:=\min\{\beta_t^{{\rm rt}}, \forall t\}$. Derived from the feasibility requirement of the proposed algorithm (see the proof of Proposition 1 in the sequel), any pair $(\Gamma, V)$ that satisfies the following conditions can be used:
\begin{equation}\label{eq.GV}
    \Gamma^{\min} \leq \Gamma \leq \Gamma^{\max}, \quad 0 < V \leq V^{\max}
\end{equation}
where
\begin{flalign}
    & \Gamma^{\min} := \max_{\tau=1, \ldots, T} \left\{\frac{1}{\eta^{\tau}} (\frac{1-\eta^{\tau}}{1-\eta}P_b^{\max} - C^{\max})-V\underline{\beta}\right\}  \label{eq.GV1}\\
    & \Gamma^{\max} := \min_{\tau=1, \ldots, T}\left\{\frac{1}{\eta^{\tau}} (\frac{1-\eta^{\tau}}{1-\eta}P_b^{\min} - C^{\min})-V\bar{\alpha}\right\} \label{eq.GV2}\\
    & V^{\max}\!:=\!\! \min_{\tau=1, \ldots, T}\!\!\left\{\!\frac{C^{\max}\!-\!C^{\min} \!-\! \frac{1-\eta^{\tau}}{1-\eta}\!(P_b^{\max}\!-\!P_b^{\min})}{\eta^{\tau} (\bar{\alpha}-\underline{\beta})}\!\right\}\!.\!\label{eq.GV3}
\end{flalign}
Note that the interval for $V$ in (\ref{eq.GV}) is well defined under condition (\ref{eq.A2}), and the interval for $\Gamma$ is valid when $V\leq V^{\max}$.

We now present the proposed TS-OC algorithm:

\begin{itemize}

  \item \textbf{Initialization}: Select $\Gamma$ and $V$, and introduce a virtual queue $Q_i(0) := C_i(0) +\Gamma$, $\forall i$.

  \item \textbf{Ahead-of-time energy planning}: Per interval $\tau=nT$, observe a realization $\boldsymbol{\xi}_n^{\rm lt}$, and determine the energy amounts $\{E_i^*[n], \forall i\}$ by solving
      \begin{align}\label{eq.lt-prob}
       \min_{\{E_i^*[n]\}} \; \sum_{i\in \cal I} &\left\{V\Bigl[G^{{\rm lt}}(E_i[n])+\sum_{t = \tau}^{\tau+T-1} \mathbb{E}\{G^{{\rm rt}}(P_i(t))\}\Bigr] \right. \nonumber\\
      & ~~~\left. +\sum_{t = \tau}^{\tau+T-1} Q_i(\tau)\mathbb{E}\{P_{b,i}(t)\}\right\}\nonumber\\
       \text{s. t.} ~(\ref{eq.sinr}),(\ref{eq.pg})&,(\ref{eq.pb}), (\ref{eq.balance}), \quad\forall t =\tau, \ldots, \tau+T-1\!
      \end{align}
    where expectations are taken over $\boldsymbol{\xi}_t^{\rm rt}$.
    Then the BSs trade energy with the main grid based on $\{E_i^*[n], \forall i\}$, and request the grid to supply an average amount $E_i^*[n]/T$ per slot $t = \tau, \ldots, \tau+T-1$.

  \item \textbf{Energy balancing and beamforming schedule}: At every slot $t \in [nT, (n+1)T-1]$, observe a realization $\boldsymbol{\xi}_t^{\rm rt}$, and decide $\{P_i^*(t), P_{b,i}^*(t), \forall i; \mathbf{w}_k^*(t), \forall k\}$ by solving the following problem given $E_i[n]=E_i^*[n]$
      \begin{align}\label{eq.rt-prob}
        \min_{\{P_i^*(t), P_{b,i}^*(t), \mathbf{w}_k^*(t)\}} \; & \sum_{i\in \cal I} \left\{V G^{{\rm rt}}(P_i(t)) + Q_i(nT) P_{b,i}(t)\right\}\nonumber\\
        \text{s. t.}\; &~~ (\ref{eq.sinr}), (\ref{eq.pg}), (\ref{eq.pb}), (\ref{eq.balance}).
      \end{align}
      The BSs perform real-time energy trading with the main grid based on $\{P_i^*(t), \forall i\}$, and coordinated beamforming based on $\{\mathbf{w}_k^*(t), \forall k\}$.

  \item \textbf{Queue updates}: Per slot $t$, charge (or discharge) the battery based on $\{P_{b,i}^*(t)\}$, so that the stored energy $C_i(t+1) = \eta C_i(t) + P_{b,i}^*(t)$, $\forall i$; and update the virtual queues $Q_i(t) := C_i(t) +\Gamma, \forall i$.
\end{itemize}

\begin{remark}
Note that we use queue sizes $Q_i(\tau)$ instead of $Q_i(t)$ in problems (\ref{eq.lt-prob}) and (\ref{eq.rt-prob}); see also \cite{Yao12, Deng13}. Recall that the main design principle in Lyapunov optimization is to choose control actions that minimize $\sum_t \sum_i \mathbb{E}\left[V \Phi_i(t) + Q_i(t)P_{b,i}(t)\right]$. For the ahead-of-time energy planning, this requires a-priori knowledge of the future queue backlogs $Q_i(t)$ over slots $[\tau+1, \ldots, \tau+T-1]$ at time $\tau=nT$. It is impractical to assume that this information is available. For this reason, we simply approximate future queue backlog values as the current value at $\tau=nT$, i.e., $Q_i(t) \approx Q_i(\tau)$, $\forall t=\tau+1, \ldots, \tau+T-1$, in (\ref{eq.lt-prob}). To ensure that the real-time energy balancing and beamforming schedule solves the same problem as the ahead-of-time energy planning, we also use $Q_i(nT)$ in (\ref{eq.rt-prob}) although the real-time battery state of charge $Q_i(t)$ is available at slot $t$. Rigorous analysis shows that the performance penalty incurred by such an approximation does not affect the asymptotic optimality of the proposed stochastic control scheme. On the other hand, using $Q_i(t)$ in real-time energy balancing can be also suggested in practice. While our feasibility analysis affords such a modification, deriving the optimality gap is left for future research.
\end{remark}

Next, we develop efficient solvers of (\ref{eq.lt-prob}) and (\ref{eq.rt-prob}) to obtain the TS-OC algorithm.

\subsection{Real-Time Energy Balancing and Beamforming}

It is easy to argue that the objective (\ref{eq.rt-prob}) is convex. Indeed, with $\alpha_t^{\rm rt} > \beta_t^{{\rm rt}}$, the transaction cost with $P_i(t)$ can be alternatively written as
\begin{equation}\label{eq.Gr}
    G^{{\rm rt}}(P_i(t)) = \max\{\alpha_t^{\rm rt} P_i(t), \; \beta_t^{{\rm rt}} P_i(t)\}
\end{equation}
which is clearly convex \cite{convex}; and so is the objective in (\ref{eq.rt-prob}).

The SINR constraints in (\ref{eq.sinr}) can be actually rewritten into a convex form. Observe that an arbitrary phase rotation can be added to the beamforming vectors $\boldsymbol{w}_k(t)$ without affecting the SINRs. Hence, we can choose a phase so that $\boldsymbol{h}_{k,t}^H \boldsymbol{w}_k(t)$ is real and nonnegative. Then by proper rearrangement, the SINR constraints become convex second-order cone (SOC) constraints \cite{Wie06}; that is,
\begin{equation}\nonumber
\begin{split}
    & \sqrt{\sum_{l \neq k} |\boldsymbol{h}_{k,t}^H \boldsymbol{w}_l(t)|^2 + \sigma_k^2} \leq \frac{1}{\sqrt{\gamma_k}} \text{Re}\{\boldsymbol{h}_{k,t}^H \boldsymbol{w}_k(t)\}, \\
    &\text{Im}\{\boldsymbol{h}_{k,t}^H \boldsymbol{w}_k(t)\} =0, ~~\forall k.
\end{split}
\end{equation}

We can then rewrite the problem (\ref{eq.rt-prob}) as
\begin{align}\label{eq.rt-prob1}
        \min \; & \sum_{i\in \cal I} \biggl\{V G^{{\rm rt}}(P_c+\sum_{k\in \cal K} {\mathbf{w}_k^H(t)} \mathbf{B}_i \mathbf{w}_k(t) + P_{b,i}(t) - \frac{E_i^*[n_t]}{T})\biggr.\nonumber\\
         & ~~~~~~~~~~ \biggl.+ Q_i(n_tT) P_{b,i}(t)\}\biggr\}\nonumber\\
        \text{s. t.}\; &~ \sqrt{\sum_{l \neq k} |\boldsymbol{h}_{k,t}^H \boldsymbol{w}_l(t)|^2 + \sigma_k^2} \leq \frac{1}{\sqrt{\gamma_k}} \text{Re}\{\boldsymbol{h}_{k,t}^H \boldsymbol{w}_k(t)\}, \nonumber\\
        & ~\text{Im}\{\boldsymbol{h}_{k,t}^H \boldsymbol{w}_k(t)\} =0, ~\forall k\nonumber\\
        & ~P_{b}^{\min} \leq P_{b,i}(t) \leq P_{b}^{\max}, ~\forall i \nonumber\\
        &~P_c+\sum_{k\in \cal K} {\mathbf{w}_k^H(t)} \mathbf{B}_i \mathbf{w}_k(t) \leq P_g^{\max},~ \forall i.
\end{align}
As $G^{{\rm rt}}(\cdot)$ is convex and increasing, it is easy to see that $G^{{\rm rt}}(P_c+\sum_k {\mathbf{w}_k^H(t)} \mathbf{B}_i \mathbf{w}_k(t) + P_{b,i}(t) - E_i^*[n_t]/T)$ is jointly convex in $(P_{b,i}(t), \{\mathbf{w}_k(t)\})$ \cite[Sec.~3.2.4]{convex}. It then readily follows that (\ref{eq.rt-prob1}) is a convex optimization problem, which can be solved via off-the-shelf solvers.

\subsection{Ahead-of-Time Energy Planning}

To solve (\ref{eq.lt-prob}), the probability distribution function (pdf) of the random state $\boldsymbol{\xi}_t^{\rm rt}$ must be known across slots $t = nT, \ldots, (n+1)T-1$. However, this pdf is seldom available in practice. Suppose that $\boldsymbol{\xi}_t^{\rm rt}$ is independent and identically distributed (i.i.d.) over time slots, and takes values from a finite state space. It was proposed in \cite{Yao12} to obtain an empirical pdf of $\boldsymbol{\xi}_t^{\rm rt}$ from past realizations over a large window comprising $L$ intervals. This estimate becomes accurate as $L$ grows sufficiently large; then it can be used to evaluate the expectations in (\ref{eq.lt-prob}). Based on such an empirical pdf, an approximate solution for (\ref{eq.lt-prob}) could be obtained.

Different from \cite{Yao12}, here we propose a stochastic gradient approach to solve (\ref{eq.lt-prob}). Suppose that $\boldsymbol{\xi}_t^{\rm rt}$ is i.i.d. across time slots (but not necessarily with a finite support). For stationary $\boldsymbol{\xi}_t^{\rm rt}$, we can remove the index $t$ from all optimization variables, and rewrite (\ref{eq.lt-prob}) as (with short-hand notation $Q_i[n]:=Q_i(nT)$)
\begin{subequations}
\label{eq.lt-prob1}
\small{\begin{align}
\min \; & \sum_{i \in \cal I} \left\{V G^{{\rm lt}}(E_i[n])+T \mathbb{E}\left[VG^{{\rm rt}}(P_i(\boldsymbol{\xi}_t^{\rm rt}))+ Q_i[n]P_{b,i}(\boldsymbol{\xi}_t^{\rm rt})\right]\right\}\nonumber \\
\text{s. t.} ~
& \sqrt{\sum_{l \neq k} |\boldsymbol{h}_{k}^H \boldsymbol{w}_l(\boldsymbol{\xi}_t^{\rm rt})|^2 + \sigma_k^2} \leq \frac{1}{\sqrt{\gamma_k}} \text{Re}\{\boldsymbol{h}_{k}^H \boldsymbol{w}_k(\boldsymbol{\xi}_t^{\rm rt})\},\nonumber\\
& \text{Im}\{\boldsymbol{h}_{k}^H \boldsymbol{w}_k(\boldsymbol{\xi}_t^{\rm rt})\} =0, ~\forall k, \boldsymbol{\xi}_t^{\rm rt} \label{eq.p1a}\\
& P_{b}^{\min} \leq P_{b,i}(\boldsymbol{\xi}_t^{\rm rt}) \leq P_{b}^{\max}, ~\forall i, \boldsymbol{\xi}_t^{\rm rt} \label{eq.p1b}\\
& P_c+\sum_{k\in \cal K} {\mathbf{w}_k^H(\boldsymbol{\xi}_t^{\rm rt})} \mathbf{B}_i \mathbf{w}_k(\boldsymbol{\xi}_t^{\rm rt}) \leq P_g^{\max},~ \forall i, \boldsymbol{\xi}_t^{\rm rt} \label{eq.p1c}\\
P_c\!&+\!\!\sum_{k\in \cal K}\! {\mathbf{w}_k^H\!(\boldsymbol{\xi}_t^{\rm rt})} \mathbf{B}_i \mathbf{w}_k(\boldsymbol{\xi}_t^{\rm rt})\!+\!P_{b,i}(\boldsymbol{\xi}_t^{\rm rt}) \!=\! \frac{E_i[n]}{T} \!+\! P_i(\boldsymbol{\xi}_t^{\rm rt}), \forall i, \boldsymbol{\xi}_t^{\rm rt}\!.
\end{align}}
\end{subequations}
Note that this form explicitly indicates the dependence of the decision variables $\{P_i, P_{b,i}, \mathbf{w}_k\}$ on the realization of $\boldsymbol{\xi}_t^{\rm rt}$.

Since the energy planning problem (\ref{eq.lt-prob}) only determines the optimal ahead-of-time energy purchase $E_i^*[n]$, we can then eliminate the variable $P_i$ and write \eqref{eq.lt-prob1} as an unconstrained optimization problem with respect to the variable $E_i^*[n]$, namely
\begin{equation}\label{eq.lt-prob2}
     \min_{\{E_i[n]\}} \; \sum_{i \in \cal I} \Big[V G^{{\rm lt}}(E_i[n]) + T\bar{G}^{{\rm rt}}(\{E_i[n]\})\Big]
\end{equation}
where we define
\begin{align}\label{eq.Grt}
   \!\!\!\bar{G}^{{\rm rt}}\!(\!\{E_i[n]\}\!)\!&\!:=\!\!\!\! \min_{\{\!P_i,P_{b,i},\mathbf{w}_k\!\}} \!\sum_{i \in \cal I}\mathbb{E}\biggl\{\!V\Psi^{\rm rt}\!(E_i[n],P_{b,i}(\boldsymbol{\xi}_t^{\rm rt}),\{\mathbf{w}_k(\boldsymbol{\xi}_t^{\rm rt})\})\nonumber\\
    &\!+\!Q_i[n]P_{b,i}(\boldsymbol{\xi}_t^{\rm rt})\biggl\} ~\text{s. t. (\ref{eq.p1a}), (\ref{eq.p1b}), (\ref{eq.p1c})}
\end{align}
with the compact notation $\Psi^{{\rm rt}}(E_i, P_{b,i}, \{\mathbf{w}_k\})\!:=\! G^{{\rm rt}}(P_c\!+\!\sum_{k \in \cal K} {\mathbf{w}_k^H} \mathbf{B}_i \mathbf{w}_k\!+\! P_{b,i} \!-\! \frac{E_i}{T})$.
%
Since $\mathbb{E}[V\Psi^{{\rm rt}}(E_i[n], P_{b,i}(\boldsymbol{\xi}_t^{\rm rt}), \{\mathbf{w}_k(\boldsymbol{\xi}_t^{\rm rt})\})+ Q_i[n]P_{b,i}(\boldsymbol{\xi}_t^{\rm rt})]$ is jointly convex in $(E_i, P_{b,i}, \{\mathbf{w}_k\})$ [cf. \eqref{eq.rt-prob1}], then the minimization over $(P_{b,i}, \{\mathbf{w}_k\})$ is within a convex set; thus, \eqref{eq.p1a}-(\ref{eq.p1c}) is still convex with respect to $E_i[n]$ \cite[Sec.~3.2.5]{convex}.
In addition, due to $\alpha_n^{{\rm lt}} > \beta_n^{{\rm lt}}$, we can alternatively write $G^{{\rm lt}}(E_i[n]) = \max\{\alpha_n^{{\rm lt}}(E_i[n]-A_{i,n}), \beta_n^{{\rm lt}}(E_i[n]-A_{i,n})\}$, which is in the family of convex functions. Hence, \eqref{eq.lt-prob2} is generally a nonsmooth and unconstrained convex problem with respect to $\{E_i[n]\}$, which can be solved using the stochastic subgradient iteration described next.

The subgradient of $G^{{\rm lt}}(E_i[n])$ can be first written as
\[
    \partial G^{{\rm lt}}(E_i[n]) =
    \begin{cases}
        \alpha_n^{{\rm lt}}, ~~\text{if} ~ E_i[n]> A_{i,n} \\
        \beta_n^{{\rm lt}},  ~~\text{if} ~ E_i[n]< A_{i,n} \\
        \text{any } x \in [\beta_n^{{\rm lt}}, \alpha_n^{{\rm lt}}], ~~\text{if} ~ E_i[n]=A_{i,n}.
    \end{cases}
\]
With $\{P_{b,i}^{E}(\boldsymbol{\xi}_t^{\rm rt}), \mathbf{w}_k^{E}(\boldsymbol{\xi}_t^{\rm rt})\}$ denoting the optimal solution for the problem in (\ref{eq.Grt}), the partial subgradient of $\bar{G}^{{\rm rt}}(\{E_i[n]\})$ with respect to $E_i[n]$ is $\partial_i \bar{G}^{{\rm rt}}(\{E_i[n]\}) = V \mathbb{E}\{\partial \Psi^{{\rm rt}}(E_i[n], P_{b,i}^{E}(\boldsymbol{\xi}_t^{\rm rt}), \{\mathbf{w}_k^{E}(\boldsymbol{\xi}_t^{\rm rt})\})\}$, where
\[
    \partial \Psi^{{\rm rt}}(E_i[n], P_{b,i}^{E}(\boldsymbol{\xi}_t^{\rm rt}), \{\mathbf{w}_k^{E}(\boldsymbol{\xi}_t^{\rm rt})\}) =
    \begin{cases}
        \frac{-\beta_t^{{\rm rt}}}{T}, ~ \text{if}  ~\frac{E_i[n]}{T}> \Delta \\
        \frac{-\alpha_t^{\rm rt}}{T}, ~ \text{if} ~\frac{E_i[n]}{T} <\Delta \\
        x \in [\frac{-\alpha_t^{\rm rt}}{T}, \frac{-\beta_t^{{\rm rt}}}{T}], ~\text{else}
    \end{cases}
\]
with $\Delta:=P_c+\sum_k {{\mathbf{w}_k^{E}}^H(\boldsymbol{\xi}_t^{\rm rt})} \mathbf{B}_i \mathbf{w}_k^{E}(\boldsymbol{\xi}_t^{\rm rt})+ P_{b,i}^{E}(\boldsymbol{\xi}_t^{\rm rt})$.

Defining $\bar{g_i}(E_i):=V\partial G^{{\rm lt}}(E_i)+T\partial_i \bar{G}^{{\rm rt}}(\{E_i\})$, a standard sub-gradient descent iteration can be employed to find the optimal $E_i^*[n]$ for (\ref{eq.lt-prob2}), as
\begin{equation}\label{eq.sub}
    E_i^{(j+1)}[n] = [E_i^{(j)}[n] - \mu^{(j)} \bar{g}_i(E_i^{(j)}[n])]^+, \quad \forall i
\end{equation}
where $j$ denotes iteration index, and $\{\mu^{(j)}\}$ is the sequence of stepsizes.

Implementing (\ref{eq.sub}) essentially requires performing (high-dimensional) integration over the unknown multivariate distribution function of $\boldsymbol{\xi}_t^{\rm rt}$ present in $\bar{g}_i$ through $\bar{G}^{{\rm rt}}$ in \eqref{eq.Grt}.
To circumvent this impasse, a stochastic subgradient approach is devised based on the past realizations $\{\boldsymbol{\xi}_{\tau}^{\rm rt}, \tau =0,1,\ldots, nT-1\}$.
Per iteration $j$, we randomly draw a realization $\boldsymbol{\xi}_{\tau}^{\rm rt}$ from past realizations, and run the following iteration
\begin{equation}\label{eq.stocsub}
    E_i^{(j+1)}[n] = [E_i^{(j)}[n] - \mu^{(j)} g_i(E_i^{(j)}[n])]^+, \quad \forall i
\end{equation}
where $g_i(E_i^{(j)}[n]):=V(\partial G^{{\rm lt}}(E_i^{(j)}[n])+ T\partial \Psi^{{\rm rt}}(E_i^{(j)}[n], \linebreak P_{b,i}^{E}(\boldsymbol{\xi}_{\tau}^{\rm rt}), \{\mathbf{w}_k^{E}(\boldsymbol{\xi}_{\tau}^{\rm rt})\}))$ with
 $\{P_{b,i}^{E}(\boldsymbol{\xi}_{\tau}^{\rm rt}), \mathbf{w}_k^{E}(\boldsymbol{\xi}_{\tau}^{\rm rt})\}$ obtained by solving a convex problem \eqref{eq.Grt} with $E_i[n] =E_i^{(j)}[n]$.
As $g_i(E_i^{(j)}[n])$ is indeed an unbiased random realization of $\bar{g}_i(E_i^{(j)}[n]) = \mathbb{E} \{g_i(E_i^{(j)}[n])\}$\cite{SA}, if we adopt a sequence of non-summable diminishing stepsizes satisfying $\lim_{j \rightarrow \infty} \mu^{(j)} =0$ and $\sum_{j=0}^{\infty} \mu^{(j)} = \infty$, the iteration (\ref{eq.stocsub}) asymptotically converges to the optimal $\{E_i^*[n], \forall i\}$ as $j \rightarrow \infty$ \cite{Bertsekas09}.



Compared with \cite{Yao12}, the proposed stochastic subgradient method is particularly tailored for our setting, which does not require the random vector $\boldsymbol{\xi}_t^{\rm rt}$ to have discrete and finite support. In addition, as the former essentially belongs to the class of statistical learning based approaches \cite{Hua15}, the proposed stochastic method avoids constructing a histogram for learning the underlying multivariate distribution and requires a considerably smaller number of samples to obtain an accurate estimate of $E_i^*[n]$.

\begin{remark}
	The computational complexity of the proposed algorithm is fairly low. Specifically, for solving the real-time energy balancing and beamforming problem \eqref{eq.rt-prob1} per slot $t$, the off-the-shelf interior-point solver incurs a worst-case complexity ${\cal O}(I^{3.5} K^{3.5})$ to obtain the decisions $\{P_{b,i}^*(t), \forall i; \mathbf{w}_k^*(t), \forall k\}$ \cite{cvx}; for solving the ahead-of-time energy planning problem \eqref{eq.Grt} every $T$ slots, the stochastic subgradient approach needs ${\cal O}(1/\epsilon^2)$ iterations to obtain an $\epsilon$-optimal solution, while the per iteration complexity is in the order of ${\cal O}(I^{3.5} K^{3.5})$. And updating $E_i^{(j)}[n]$ in (27) requires only linear complexity ${\cal O}(I)$.
\end{remark}

\section{Performance Analysis}

In this section, we show that the TS-OC can yield a feasible and asymptotically (near-)optimal solution for problem (\ref{eq.prob}).

\subsection{Feasibility Guarantee}

Note that in problems (\ref{eq.lt-prob}) and (\ref{eq.rt-prob}), $\{C_i(t)\}$ are removed from the set of optimization variables and the constraints in (\ref{eq.Ci}) are ignored. While the battery dynamics $C_i(t+1) = \eta C_i(t) + P_{b,i}(t)$ are accounted for by the TS-OC algorithm (in the step of ``Queue updates''), it is not clear whether the resultant $C_i(t) \in [C^{\min}, C^{\max}]$, $\forall i, t$. Yet, we will show that by selecting a pair $(\Gamma, V)$ in (\ref{eq.GV}), we can guarantee that $C^{\min} \leq C_i(t) \leq C^{\max}$, $\forall i, t$; meaning, the online control policy produced by the TS-OC is a feasible one for the original problem (\ref{eq.prob}), under the conditions (\ref{eq.A1})--(\ref{eq.A2}).

To this end, we first show the following lemma.
\begin{lemma}
If $\bar{\alpha}:= \max\{\alpha_t^{\rm rt}, \forall t\}$ and $\underline{\beta}:= \min\{\beta_t^{{\rm rt}}, \forall t\}$, the battery (dis)charging amounts $P_{b,i}^*(t)$ obtained from the TS-OC algorithm satisfy: i) $P_{b,i}^*(t) = P_{b}^{\min}$, if $C_i(n_tT) > -V\underline{\beta} -\Gamma$; and ii) $P_{b,i}^*(t) = P_{b}^{\max}$, if $C_i(n_tT) < -V\bar{\alpha} -\Gamma$.
\end{lemma}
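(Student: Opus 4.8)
The plan is to eliminate $P_i(t)$ through the supply--demand balance \eqref{eq.balance} and then examine the objective of \eqref{eq.rt-prob} as a function of the single variable $P_{b,i}(t)$, with all beamformers held fixed at an optimal choice. Substituting $P_i(t)=P_c+\sum_{k}\mathbf{w}_k^H(t)\mathbf{B}_i\mathbf{w}_k(t)+P_{b,i}(t)-E_i^*[n_t]/T$ makes $P_i(t)$ an affine function of $P_{b,i}(t)$ of unit slope, so the per-BS summand is $V\,G^{\rm rt}(P_i(t))+Q_i(n_tT)\,P_{b,i}(t)$, which is convex in $P_{b,i}(t)$. Crucially, the SINR and per-BS power constraints \eqref{eq.sinr}, \eqref{eq.pg} involve only the beamformers while \eqref{eq.pb} involves only $P_{b,i}(t)$; hence the feasible set is a product set, and once the optimal beamformers are fixed the minimization decouples across $i$ into a scalar convex program over $P_{b,i}(t)\in[P_b^{\min},P_b^{\max}]$.

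First I would characterize the subgradient of this scalar objective. From $G^{\rm rt}(P_i)=\max\{\alpha_t^{\rm rt}P_i,\beta_t^{\rm rt}P_i\}$ one has $\partial G^{\rm rt}(P_i)=\alpha_t^{\rm rt}$ when $P_i>0$, $\beta_t^{\rm rt}$ when $P_i<0$, and any value in $[\beta_t^{\rm rt},\alpha_t^{\rm rt}]$ when $P_i=0$. By the unit-slope chain rule, every subgradient of the summand with respect to $P_{b,i}(t)$ equals $V\,\partial G^{\rm rt}(P_i(t))+Q_i(n_tT)$, and therefore lies in the interval $[\,V\beta_t^{\rm rt}+Q_i(n_tT),\,V\alpha_t^{\rm rt}+Q_i(n_tT)\,]$.

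Next I would plug in the hypotheses, recalling $Q_i(n_tT)=C_i(n_tT)+\Gamma$ from the queue definition. For part (i), $C_i(n_tT)>-V\underline{\beta}-\Gamma$ is equivalent to $Q_i(n_tT)+V\underline{\beta}>0$; since $\beta_t^{\rm rt}\ge\underline{\beta}$ and $V>0$, the smallest subgradient obeys $V\beta_t^{\rm rt}+Q_i(n_tT)\ge V\underline{\beta}+Q_i(n_tT)>0$, so every subgradient is strictly positive. The subgradient inequality $f(b_2)\ge f(b_1)+g\,(b_2-b_1)$ with $g>0$ then forces the objective to be strictly increasing on $[P_b^{\min},P_b^{\max}]$, so its unique minimizer is $P_{b,i}^*(t)=P_b^{\min}$. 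Part (ii) is the mirror image: $C_i(n_tT)<-V\bar{\alpha}-\Gamma$ gives $V\alpha_t^{\rm rt}+Q_i(n_tT)\le V\bar{\alpha}+Q_i(n_tT)<0$ via $\alpha_t^{\rm rt}\le\bar{\alpha}$, so every subgradient is strictly negative, the objective is strictly decreasing, and the minimizer is $P_{b,i}^*(t)=P_b^{\max}$.

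The one point needing care is to conclude the boundary property at a \emph{joint} optimum rather than only for fixed beamformers. A short exchange argument settles this: starting from any joint optimal $(\{\mathbf{w}_k^*(t)\},\{P_{b,i}^*(t)\})$, replacing $P_{b,i}^*(t)$ by $P_b^{\min}$ in case (i) (or by $P_b^{\max}$ in case (ii)) keeps the point feasible and leaves every other BS's term untouched, while strictly lowering BS $i$'s term by the strict monotonicity just established---contradicting optimality unless $P_{b,i}^*(t)$ already equals that endpoint. Invoking the subgradient inequality rather than a derivative is what keeps the nonsmooth $G^{\rm rt}$ argument rigorous, and is the step I would write out most carefully.
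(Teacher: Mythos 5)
Your proof is correct and follows essentially the same route as the paper's: eliminate $P_i(t)$ via the balance equation, observe that the per-BS objective decouples, and bound the (sub)gradient with respect to $P_{b,i}(t)$ by $V\bar{\alpha}+Q_i(n_tT)$ from above and $V\underline{\beta}+Q_i(n_tT)$ from below to force the boundary solution. You simply spell out two details the paper leaves implicit---the subgradient treatment of the kink in $G^{\rm rt}$ and the exchange argument at a joint optimum---both of which are sound.
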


\begin{proof}
In TS-OC, we determine $P_{b,i}^*(t)$ by solving (\ref{eq.rt-prob}). From the equivalent problem (\ref{eq.rt-prob1}), we can see that the determination of $P_{b,i}^*(t)$ is decoupled across BSs, and it depends on the first derivative of $G^{{\rm rt}}(\cdot)$. By (\ref{eq.Gr}), the maximum possible gradient for $G^{{\rm rt}}(\cdot)$ is $V\bar{\alpha}$. It then follows that if $V\bar{\alpha}+Q_i(n_tT) <0$, we must have $P_{b,i}^*(t) = P_{b}^{\max}$. Similarly, if $V\underline{\beta}+Q_i(n_tT) >0$, we must have $P_{b,i}^*(t) = P_{b}^{\min}$. Given that $Q_i(t)=C_i(t) +\Gamma$, the lemma follows readily.
\end{proof}

Lemma 1 reveals partial characteristics of the dynamic TS-OC policy. Specifically, when the energy queue (i.e., battery SoC) is large enough,
the battery must be discharged as much as possible; that is, $P_{b,i}^*(t) = P_{b}^{\min}$. On the other hand, when the energy queue is small enough, the battery must be charged as much as possible; i.e., $P_{b,i}^*(t)  = P_{b,i}^{\max}$.
Alternatively, such results can be justified by the economic interpretation of the virtual queues. Specifically, $-\frac{Q_i(t)}{V}$ can be viewed as the instantaneous discharging price. For high prices $-\frac{Q_i(t)}{V} >\bar{\alpha}$, the TS-OC dictates full charge. Conversely, the battery units can afford full discharge if the price is low.

Based on the structure in Lemma 1, we can thus establish the following result.
\begin{proposition}
Under the conditions (\ref{eq.A1})--(\ref{eq.A2}), the TS-OC algorithm with any pair $(\Gamma, V)$ specified in (\ref{eq.GV}) guarantees $C^{\min} \leq C_i(t) \leq C^{\max}$, $\forall i$, $\forall t$.
\end{proposition}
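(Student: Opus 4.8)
The plan is to argue by induction over the coarse-grained intervals, establishing that the SoC bound $C^{\min}\le C_i(t)\le C^{\max}$ is preserved from the start of one interval to the start of the next, and holds at every intermediate slot. The base case $t=0$ is the given initial condition $C_i(0)\in[C^{\min},C^{\max}]$. For the inductive step I would fix $i$ and $n$, assume $C^{\min}\le C_i(nT)\le C^{\max}$, and prove $C^{\min}\le C_i(nT+\tau)\le C^{\max}$ for all $\tau=1,\ldots,T$; the case $\tau=T$ then carries the hypothesis to interval $n+1$. The reason the per-interval analysis is clean is that the virtual queue used throughout one interval is frozen at $Q_i(nT)=C_i(nT)+\Gamma$, so Lemma 1 classifies the entire interval into a single regime determined by $C_i(nT)$.

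First I would unroll the battery recursion \eqref{eq.Ci}: iterating $C_i(t+1)=\eta C_i(t)+P_{b,i}(t)$ gives $C_i(nT+\tau)=\eta^\tau C_i(nT)+\sum_{s=0}^{\tau-1}\eta^{\tau-1-s}P_{b,i}(nT+s)$, and since $\sum_{s=0}^{\tau-1}\eta^{\tau-1-s}=\tfrac{1-\eta^\tau}{1-\eta}$ and $P_b^{\min}\le P_{b,i}\le P_b^{\max}$, the SoC is sandwiched between $\eta^\tau C_i(nT)+\tfrac{1-\eta^\tau}{1-\eta}P_b^{\min}$ and $\eta^\tau C_i(nT)+\tfrac{1-\eta^\tau}{1-\eta}P_b^{\max}$. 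Second, I would rewrite the bracketing conditions on $\Gamma$: elementary rearrangement of \eqref{eq.GV1} shows that $\Gamma\ge\Gamma^{\min}$ is equivalent to $\eta^\tau(-V\underline{\beta}-\Gamma)+\tfrac{1-\eta^\tau}{1-\eta}P_b^{\max}\le C^{\max}$ for every $\tau=1,\ldots,T$, and rearrangement of \eqref{eq.GV2} shows $\Gamma\le\Gamma^{\max}$ is equivalent to $\eta^\tau(-V\bar{\alpha}-\Gamma)+\tfrac{1-\eta^\tau}{1-\eta}P_b^{\min}\ge C^{\min}$ for every $\tau$. These two inequalities are exactly the bounds I will match against the sandwiched expressions.

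The upper bound then follows from a dichotomy on $C_i(nT)$ against the discharge threshold $-V\underline{\beta}-\Gamma$. If $C_i(nT)>-V\underline{\beta}-\Gamma$, Lemma 1 forces $P_{b,i}^*(t)=P_b^{\min}<0$ throughout the interval, so with $\eta\le1$ the SoC can only fall below its starting value and stays $\le C^{\max}$. If $C_i(nT)\le-V\underline{\beta}-\Gamma$, I use the upper sandwich together with $\eta^\tau>0$ to get $C_i(nT+\tau)\le\eta^\tau(-V\underline{\beta}-\Gamma)+\tfrac{1-\eta^\tau}{1-\eta}P_b^{\max}\le C^{\max}$ by the rewritten \eqref{eq.GV1}. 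The lower bound is symmetric, splitting on the charge threshold $-V\bar{\alpha}-\Gamma$: if $C_i(nT)\ge-V\bar{\alpha}-\Gamma$, the lower sandwich and rewritten \eqref{eq.GV2} give $C_i(nT+\tau)\ge\eta^\tau(-V\bar{\alpha}-\Gamma)+\tfrac{1-\eta^\tau}{1-\eta}P_b^{\min}\ge C^{\min}$; if $C_i(nT)<-V\bar{\alpha}-\Gamma$, Lemma 1 forces $P_{b,i}^*(t)=P_b^{\max}$, and here I invoke condition \eqref{eq.A1}, $P_b^{\max}\ge(1-\eta)C^{\min}$, so that $\tfrac{1-\eta^\tau}{1-\eta}P_b^{\max}\ge(1-\eta^\tau)C^{\min}$ and hence $C_i(nT+\tau)\ge\eta^\tau C^{\min}+(1-\eta^\tau)C^{\min}=C^{\min}$.

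The main obstacle I anticipate is the bookkeeping that ties the two Lemma 1 thresholds to the correct charge/discharge regimes and selects, in each regime, the right one-sided bound ($P_b^{\min}$ versus $P_b^{\max}$) to compare against the rewritten $\Gamma^{\min}$/$\Gamma^{\max}$ inequalities, while ensuring the estimates hold uniformly over all $\tau=1,\ldots,T$ (which is precisely why the definitions \eqref{eq.GV1}--\eqref{eq.GV2} take a max/min over $\tau$). The one genuinely nontrivial ingredient is the use of \eqref{eq.A1} to show that maximal charging overcomes the leakage in the charge regime; condition \eqref{eq.A2} plays the complementary role of guaranteeing that the admissible range \eqref{eq.GV} for $(\Gamma,V)$ is nonempty, so that a feasible pair exists in the first place.
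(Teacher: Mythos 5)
Your proposal is correct and follows essentially the same route as the paper's Appendix A: unroll the recursion over one coarse-grained interval, use Lemma 1 with the frozen queue $Q_i(nT)$ to pin down the (dis)charging regime, and match the resulting bounds against the rearranged definitions of $\Gamma^{\min}$, $\Gamma^{\max}$ and condition (\ref{eq.A1}). The only difference is cosmetic --- you split into two cases for each of the upper and lower bounds, while the paper partitions $C_i(nT)$ into three intervals --- and the inequalities invoked are identical.
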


\begin{proof}
See Appendix A.
\end{proof}

\begin{remark}
Note that Proposition 1 is a {\em sample path} result; meaning, the bounded energy queues $C_i(t) \in [C^{\min}, C^{\max}]$, $\forall i$, hold per time slot under {\em arbitrary}, even non-stationary, $\{\boldsymbol{A}_n, \alpha_n^{{\rm lt}}, \beta_n^{{\rm lt}}, \alpha_t^{\rm rt}, \beta_t^{{\rm rt}},\mathbf{H}_t\}$ processes. In other words, under the mild conditions (\ref{eq.A1})--(\ref{eq.A2}), the proposed TS-OC with proper selection of $(\Gamma, V)$ always yields a feasible control policy for (\ref{eq.prob}).
\end{remark}

\subsection{Asymptotic Optimality}

To facilitate the analysis, we assume that the random processes $\{\boldsymbol{\xi}_n^{\rm lt}\}$ and $\{\boldsymbol{\xi}_t^{\rm rt}\}$ are both i.i.d. over slow and fast timescales, respectively.
Define $\bar{C}_i := \frac{1}{NT} \sum_{t=0}^{NT-1} \mathbb{E}\{C_i(t)\}$ and $\bar{P}_{b,i} := \frac{1}{NT} \sum_{t=0}^{NT-1} \mathbb{E}\{P_{b,i}(t)\}$. Since $P_{b,i}(t) \in [P_b^{\min}, P_b^{\max}]$ and $C_i(t+1)=\eta C_i(t)+P_{b,i}(t)$, it holds that
\begin{equation}\label{relax1}
    \bar{P}_{b,i} = \frac{1}{NT} \sum_{t=0}^{NT-1}\mathbb{E}\{C_i(t+1) - \eta C_i(t)\} = (1-\eta)\bar{C}_i.
\end{equation}
As $C_i(t) \in [C^{\min}, C^{\max}]$, $\forall t$, (\ref{relax1}) then implies
\begin{equation}\label{relax2}
    (1-\eta)C^{\min} \leq \bar{P}_{b,i} \leq (1-\eta)C^{\max}, \quad \forall i.
\end{equation}

Consider now the following problem
\begin{equation}\label{eq.relax}
\begin{split}
   \tilde{\Phi}^{opt} := & \min_{{\cal X}} \;\lim_{N\rightarrow \infty} \frac{1}{NT} \sum_{t=0}^{NT-1} \sum_{i \in {\cal I}} \mathbb{E} \{\Phi_i(t)\} \\
   & \text{s. t.} ~~~ (\ref{eq.sinr}), (\ref{eq.pg}), (\ref{eq.pb}), (\ref{eq.balance}), ~\forall t, ~~ (\ref{relax2}).
\end{split}
\end{equation}
Note that the constraints in (\ref{eq.Ci}), $\forall t$, are replaced by (\ref{relax2}); i.e., the queue dynamics that need to be performed per realization per slot are replaced by a time-averaged constraint per BS~$i$. The problem (\ref{eq.relax}) is thus a relaxed version of (\ref{eq.prob}) \cite{Qin15}. Specifically, any feasible solution of (\ref{eq.prob}), satisfying (\ref{eq.Ci}), $\forall t$, also satisfies (\ref{relax2}) in (\ref{eq.relax}), due to the boundedness of $P_{b,i}(t)$ and $C_i(t)$. It then follows that $\tilde{\Phi}^{opt} \leq \Phi^{opt}$.

Variables $\{C_i(t)\}$ are removed from (\ref{eq.relax}), and other optimization variables are ``decoupled'' across time slots due to the removal of constraints (\ref{eq.Ci}). This problem has an easy-to-characterize stationary optimal control policy as formally stated in the next lemma.
\begin{lemma}
If $\boldsymbol{\xi}_n^{\rm lt}$ and $\boldsymbol{\xi}_t^{\rm rt}$ are i.i.d., there exists a stationary control policy ${\cal P}^{stat}$ that is a pure (possibly randomized) function of the current $(\boldsymbol{\xi}_{n_t}^{\rm lt}, \boldsymbol{\xi}_t^{\rm rt})$, while satisfying (\ref{eq.sinr}), (\ref{eq.pg}), (\ref{eq.pb}), (\ref{eq.balance}), and providing the following guarantees per $t$:
\begin{equation}\label{eq.opt}
\begin{split}
    & \mathbb{E}\{\sum_{i \in {\cal I}} \Phi_i^{stat}(t)\} = \tilde{\Phi}^{opt} \\
    & (1-\eta)C^{\min} \leq \mathbb{E}\{P_{b,i}^{stat}(t)\} \leq (1-\eta)C^{\max}, \;\; \forall i
\end{split}
\end{equation}
where $P_{b,i}^{stat}(t)$ denotes the decided (dis)charging amount, $\Phi_i^{stat}(t)$ the resultant transaction cost by policy ${\cal P}^{stat}$, and expectations are taken over the randomization of $(\boldsymbol{\xi}_{n_t}^{\rm lt}, \boldsymbol{\xi}_t^{\rm rt})$ and (possibly) ${\cal P}^{stat}$.
\end{lemma}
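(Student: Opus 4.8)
The plan is to recognize that the relaxed problem (\ref{eq.relax}) is a pure i.i.d.\ stochastic optimization whose only inter-temporal coupling is the single time-averaged constraint (\ref{relax2}); every other constraint, namely (\ref{eq.sinr}), (\ref{eq.pg}), (\ref{eq.pb}), and (\ref{eq.balance}), is imposed per slot and involves only the current decision and the current state $\boldsymbol{\omega}_t := (\boldsymbol{\xi}_{n_t}^{\rm lt}, \boldsymbol{\xi}_t^{\rm rt})$. For problems of this structure the standard result in Lyapunov stochastic optimization guarantees that the optimal long-run average cost is attained by a stationary randomized policy reacting only to the instantaneous $\boldsymbol{\omega}_t$, and I would reproduce this argument in our setting and then read off the two guarantees in (\ref{eq.opt}).

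First I would fix a realization $\boldsymbol{\omega}$ and collect the decision variables into an action $a := (\{E_i\}, \{P_i\}, \{P_{b,i}\}, \{\mathbf{w}_k\})$; the instantaneous constraints (\ref{eq.sinr}), (\ref{eq.pg}), (\ref{eq.pb}), (\ref{eq.balance}) then carve out a feasible action set $\mathcal{A}(\boldsymbol{\omega})$ on which the per-slot cost $\sum_i \Phi_i$ is continuous and each $P_{b,i}$ is linear. Restricted to finite cost this set is effectively compact: the beamformers are bounded through (\ref{eq.pg}), $P_{b,i}$ lies in the closed interval of (\ref{eq.pb}), $P_i$ is then pinned by the balance equation (\ref{eq.balance}), and $E_i$ may be confined to a bounded range without loss since $G^{\rm lt}$ grows affinely. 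Next I would define the achievable region $\mathcal{R}$ of vectors $(\phi, \{p_i\})$ with $\phi = \mathbb{E}\{\sum_i \Phi_i\}$ and $p_i = \mathbb{E}\{P_{b,i}\}$, realizable by some stationary randomized policy mapping each $\boldsymbol{\omega}$ to a distribution over $\mathcal{A}(\boldsymbol{\omega})$. Randomizing (time-sharing) between two policies shows $\mathcal{R}$ is convex, while compactness of $\mathcal{A}(\boldsymbol{\omega})$ together with continuity shows $\mathcal{R}$ is closed and bounded.

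The crux is to show that (\ref{eq.relax}) is equivalent to $\min \phi$ over $\mathcal{R}$ intersected with the box $(1-\eta)C^{\min} \le p_i \le (1-\eta)C^{\max}$. The easy inclusion is that every such stationary policy is feasible for (\ref{eq.relax}) with matching time-averages. The hard inclusion, which I expect to be the main obstacle, is that the long-run averages $\big(\lim_N \tfrac{1}{NT}\sum_t \mathbb{E}\{\sum_i \Phi_i(t)\}, \{\bar{P}_{b,i}\}\big)$ produced by \emph{any} feasible, possibly history-dependent or non-stationary, policy already lie in $\mathcal{R}$, so that no such policy can beat the best stationary randomized rule. I would establish this by the usual measure-theoretic/Carath\'{e}odory argument: using that $\{\boldsymbol{\omega}_t\}$ have identical marginals, extract from the arbitrary policy its long-run conditional action distribution given the state, invoke boundedness via dominated convergence to show the induced averages coincide with those of the stationary policy that replays this conditional distribution in each slot, and use convexity of $\mathcal{R}$ to collapse the mixture to a single stationary randomized policy.

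Finally, since the feasible set is nonempty and $\mathcal{R}$ is closed and bounded, the minimum of $\phi$ over $\mathcal{R}$ intersected with the box is attained at some $(\tilde{\Phi}^{opt}, \{p_i^*\})$, and the associated stationary policy is the sought ${\cal P}^{stat}$. Because it is stationary and the states have identical marginals across slots, its per-slot expectations are time-invariant, whence $\mathbb{E}\{\sum_i \Phi_i^{stat}(t)\} = \tilde{\Phi}^{opt}$ and $\mathbb{E}\{P_{b,i}^{stat}(t)\} = p_i^* \in [(1-\eta)C^{\min}, (1-\eta)C^{\max}]$ for every $t$, which are exactly the guarantees in (\ref{eq.opt}). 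One subtlety I would flag is the two-scale structure: although $E_i$ is physically chosen once per coarse interval in the actual TS-OC, Lemma~2 only posits a \emph{benchmark} policy, so permitting ${\cal P}^{stat}$ to select $E_i$ as a function of the full current state $\boldsymbol{\omega}_t$ is legitimate and leaves the inequality $\tilde{\Phi}^{opt} \le \Phi^{opt}$ intact.
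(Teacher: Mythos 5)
Your argument is correct and is essentially the same as the paper's: the paper's entire proof of this lemma is a one-line citation to Theorem 4.5 of Neely's monograph, whose standard argument (convexity and closedness of the achievable region of stationary randomized $\omega$-only policies, plus the fact that the long-run averages of any feasible history-dependent policy already lie in that region) is precisely what you reconstruct. The only detail worth tightening is that the $E_i$ component of ${\cal P}^{stat}$ should depend only on $\boldsymbol{\xi}_{n_t}^{\rm lt}$ so that it stays constant over each coarse interval, consistent with the definition of ${\cal X}$ in (\ref{eq.relax}); this does not change the conclusion.
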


\begin{proof}
The proof argument is similar to that in \cite[Theorem 4.5]{Nee10}; hence, it is omitted for brevity.
\end{proof}

Lemma 2 in fact holds for many non-i.i.d. scenarios as well. Generalizations to other stationary processes, or even to non-stationary processes, can be found in \cite{Nee10} and \cite{Nee10-2}.

It is worth noting that (\ref{eq.opt}) not only assures that the stationary control policy ${\cal P}^{stat}$ achieves the optimal cost for (\ref{eq.relax}), but also guarantees that the resultant expected transaction cost  per slot $t$ is equal to the optimal time-averaged cost (due to the stationarity of $\boldsymbol{\xi}_{n_t}^{\rm lt}$, $\boldsymbol{\xi}_t^{\rm rt}$ and ${\cal P}^{stat}$). This plays a critical role in establishing the following result.
\begin{proposition}\label{Prop. 2}
Suppose that conditions (\ref{eq.A1})--(\ref{eq.GV}) hold. If $\boldsymbol{\xi}_n^{\rm lt}$ and $\boldsymbol{\xi}_t^{\rm rt}$ are i.i.d. across time, then the time-averaged cost under the proposed TS-OC algorithm satisfies
\[
   \lim_{N\rightarrow \infty} \frac{1}{NT} \sum_{t=0}^{NT-1} \sum_{i \in {\cal I}} \mathbb{E} \{\Phi_i^*(t)\} \leq \Phi^{opt} + \frac{M_1+M_2+M_3}{V}
\]
where the constants\footnote{Note that $\lim_{\eta\rightarrow1} \frac{1-\eta^T}{1-\eta} =T$, and $\lim_{\eta\rightarrow1} \frac{T(1-\eta)-(1-\eta^T)}{(1-\eta)(1-\eta^T)} =\frac{T-1}{2}$.}
\begin{align}
    M_1 &:= \frac{IT(1-\eta)}{2\eta(1-\eta^T)}M_B \label{eq.M1}\\
    M_2 &:= \frac{I[T(1-\eta)-(1-\eta^T)]}{(1-\eta)(1-\eta^T)} M_B \label{eq.M2} \\
    M_3 &:=  I (1-\eta) M_C \label{eq.M3}
\end{align}
with $M_B$ and $M_C$ given by
\begin{equation}\nonumber
\begin{split}
    M_B &:= \max\{[(1-\eta) \Gamma + P_b^{\min}]^2, [(1-\eta) \Gamma + P_b^{\max}]^2\} \\
    M_C &:= \max\{(\Gamma + C^{\min})^2, (\Gamma + C^{\max})^2\};
\end{split}
\end{equation}
$\Phi_i^*(t)$ denotes the resultant cost with the TS-OC, and $\Phi^{opt}$ is the optimal value of (\ref{eq.prob}) under any feasible control algorithm, including the one knowing all future realizations.
\end{proposition}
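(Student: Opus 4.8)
The plan is to run a drift-plus-penalty argument on a quadratic Lyapunov function of the perturbed queues $Q_i(t)=C_i(t)+\Gamma$, carried out one coarse-grained interval at a time and then telescoped. First I would translate the battery recursion $C_i(t+1)=\eta C_i(t)+P_{b,i}(t)$ into the perturbed-queue recursion $Q_i(t+1)=\eta Q_i(t)+(1-\eta)\Gamma+P_{b,i}(t)$, set $L(t):=\tfrac12\sum_i Q_i^2(t)$, and compute the one-slot drift
$$\Delta(t)=\sum_i\Big[\tfrac12(\eta^2-1)Q_i^2(t)+\eta Q_i(t)\big((1-\eta)\Gamma+P_{b,i}(t)\big)+\tfrac12\big((1-\eta)\Gamma+P_{b,i}(t)\big)^2\Big].$$
Summing over $t=nT,\dots,(n+1)T-1$ gives the interval drift $\mathbb{E}\{L((n+1)T)-L(nT)\}$, to which I add the penalty $V\sum_{t}\sum_i\mathbb{E}\{\Phi_i(t)\}$.

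The crux is to compare this interval drift-plus-penalty against what the TS-OC actually optimizes. Since $\sum_{t=nT}^{(n+1)T-1}\Phi_i(t)=G^{\rm lt}(E_i[n])+\sum_t G^{\rm rt}(P_i(t))$, the two subproblems \eqref{eq.lt-prob}--\eqref{eq.rt-prob} together minimize, by construction, the frozen objective $\sum_i\mathbb{E}\{V\sum_{t}\Phi_i^*(t)+Q_i(nT)\sum_{t}P_{b,i}^*(t)\}$ over the interval with the queue held at $Q_i(nT)$. I would therefore split the interval drift-plus-penalty as this frozen objective plus a remainder, and bound the remainder term-by-term; the three pieces are exactly the sources of $M_1,M_2,M_3$. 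The quadratic terms $\tfrac12((1-\eta)\Gamma+P_{b,i})^2\le\tfrac12 M_B$ yield $M_1$. The mismatch $\sum_t(\eta Q_i(t)-Q_i(nT))P_{b,i}^*(t)$ between the true time-varying queue appearing in the drift and the frozen queue appearing in the objective yields the two-scale penalty $M_2$; here I would bound $|\eta Q_i(t)-Q_i(nT)|$ by iterating the recursion against the charging bounds, which is what produces the geometric factor $\tfrac{T(1-\eta)-(1-\eta^T)}{(1-\eta)(1-\eta^T)}$. The degeneration term $\tfrac12(\eta^2-1)Q_i^2(t)$ is nonpositive and can simply be dropped from the upper bound, so it does \emph{not} contribute a penalty.

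Next I would invoke the stationary benchmark. By optimality of the frozen objective, substituting the policy $\mathcal{P}^{\rm stat}$ of Lemma 2 upper-bounds it by $VT\tilde\Phi^{opt}+\sum_i Q_i(nT)\cdot T\,\mathbb{E}\{P_{b,i}^{\rm stat}\}$, using the guarantees in \eqref{eq.opt}. For an ideal battery this last cross-term would vanish because the mean charging rate is zero; here \eqref{relax2} only forces $\mathbb{E}\{P_{b,i}^{\rm stat}\}\in[(1-\eta)C^{\min},(1-\eta)C^{\max}]$, so it is $O(1-\eta)$ rather than zero, and bounding $Q_i(nT)\,\mathbb{E}\{P_{b,i}^{\rm stat}\}$ gives the genuinely new leakage penalty $M_3=I(1-\eta)M_C$. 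Finally I would telescope the interval inequalities over $n=0,\dots,N-1$: the Lyapunov terms collapse to $\mathbb{E}\{L(NT)\}-L(0)$, which is bounded by Proposition 1 and hence vanishes after dividing by $NTV$ and letting $N\to\infty$; the factor $T$ in the per-interval bounds cancels against the $NT$ normalization, leaving $\tfrac1{NT}\sum_t\sum_i\mathbb{E}\{\Phi_i^*(t)\}\le\tilde\Phi^{opt}+(M_1+M_2+M_3)/V$, and the claim follows from $\tilde\Phi^{opt}\le\Phi^{opt}$.

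The step I expect to be the main obstacle is the bookkeeping forced by the leakage factor $\eta$. For an ideal battery the telescoping is clean and only the quadratic and freezing penalties survive; with $\eta<1$ the drift cross-term carries an extra $\eta$, the intra-interval queue deviations accumulate geometrically rather than linearly, and—most importantly—the benchmark stationary policy is compelled to (dis)charge at a nonzero average rate, so the term that would otherwise cancel must instead be absorbed into $M_3$. Keeping all three $\eta$-dependent constants mutually consistent, and verifying that they reduce to the ideal-battery values $\tfrac I2 M_B$, $\tfrac{I(T-1)}2 M_B$, and $0$ as $\eta\to1$ (per the footnote limits), is where the care is needed.
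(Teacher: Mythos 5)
Your overall strategy coincides with the paper's: a perturbed quadratic Lyapunov function $L(\mathbf{Q}(t))=\tfrac12\sum_i Q_i^2(t)$, a $T$-slot drift-plus-penalty bound, comparison against the stationary policy of Lemma 2, telescoping, and $\tilde{\Phi}^{opt}\le\Phi^{opt}$. The quadratic term giving $M_1$, the dropping of the nonpositive $-\tfrac12(1-\eta^2)Q_i^2(t)$ term, and the attribution of $M_3$ to the nonzero mean (dis)charging rate forced on the benchmark by \eqref{relax2} are all as in the paper.

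There is, however, a genuine gap in your key comparison step. You extract the frozen objective $\sum_t\sum_i\mathbb{E}\{V\Phi_i^*(t)+Q_i(nT)P_{b,i}^*(t)\}$ with \emph{uniform} weights and propose to control the residual $\sum_t(\eta Q_i(t)-Q_i(nT))P_{b,i}^*(t)$. Iterating the recursion gives
$$\eta Q_i(t)-Q_i(nT)=(\eta^{t-nT+1}-1)Q_i(nT)+\eta\sum_{\tau=nT}^{t-1}\eta^{t-1-\tau}\bigl[(1-\eta)\Gamma+P_{b,i}^*(\tau)\bigr],$$
and only the second piece produces the geometric factor you cite for $M_2$; the first piece, of order $(1-\eta^{t-nT+1})\sqrt{M_C}\,\max\{|P_b^{\min}|,P_b^{\max}\}$ per slot, is a fourth residual that is not contained in $M_1+M_2+M_3$, so your decomposition cannot yield the stated constants. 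The paper never compares $\eta Q_i(t)$ to the unweighted $Q_i(nT)$: it bounds $\eta Q_i(t)[(1-\eta)\Gamma+P_{b,i}^*(t)]$ by $\eta^{t-nT+1}Q_i[n][(1-\eta)\Gamma+P_{b,i}^*(t)]$ plus a residual (keeping the natural geometric decay on the frozen queue), and then adds the penalty $V\mathbb{E}\{\Phi_i^*(t)\}$ with the \emph{same} geometric weights $\eta^{t-nT+1}$. The ingredient that makes this work --- and which is absent from your proposal --- is stationarity: since $\boldsymbol{\xi}_t^{\rm rt}$ is i.i.d.\ and the queue is frozen at $Q_i[n]$ within the interval, $\mathbb{E}\{V\Phi_i^*(t)+Q_i[n]P_{b,i}^*(t)\}$ is identical for every slot of the interval, so $\sum_t\eta^{t-nT+1}(\cdot)=\frac{\eta(1-\eta^T)}{(1-\eta)T}\sum_t(\cdot)$, which is exactly the unweighted objective that TS-OC minimizes and that ${\cal P}^{stat}$ upper-bounds via \eqref{eq.opt}. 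This also dictates the final normalization: the telescoped bound is divided by $N\frac{\eta(1-\eta^T)}{1-\eta}V$ rather than $NTV$, and it is this division that produces the prefactors in \eqref{eq.M1}--\eqref{eq.M2} and reduces the pre-normalized benchmark term $I\eta(1-\eta^T)M_C$ to $M_3=I(1-\eta)M_C$. Your claim that ``the factor $T$ cancels against the $NT$ normalization'' is therefore correct only in the limit $\eta\to1$; for $\eta<1$ your bookkeeping yields a bound with extra, differently structured constants rather than the stated $\frac{M_1+M_2+M_3}{V}$.
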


\begin{proof}
See Appendix B.
\end{proof}

\begin{remark}
Proposition 2 asserts that the proposed TS-OC algorithm ends up with a time-averaged cost having optimality gap smaller than $\frac{M_1+M_2+M_3}{V}$.
The novel TS-OC can also be viewed as a modified version of a classic queue-length based stochastic optimization scheme, where queue lengths play the role of ``stochastic'' Lagrange multipliers with a dual-subgradient solver to the regularized dual problem by subtracting an $\ell_2$-norm of Lagrange multipliers. Intuitively, the gap $M_1/V$ is inherited from the underlying stochastic subgradient method. The gap $M_2/V$ is introduced by the inaccurate queue lengths in use (since we use $Q_i(nT)$, instead of $Q_i(t)$, for all $t=nT, \ldots, (n+1)T-1$), while the gap $M_3/V$ is incurred by the presence of the $\ell_2$ regularizer in the dual function (a. k. a. the price of battery imperfections).
\end{remark}

\subsection{Main Theorem}

Based on Propositions 1 and 2, it is now possible to arrive at our main result.
\begin{theorem}\label{Them:main}
Suppose that conditions (\ref{eq.A1})--(\ref{eq.GV}) hold and $(\boldsymbol{\xi}_n^{\rm lt}, \boldsymbol{\xi}_t^{\rm rt})$ are i.i.d. over slots. Then the proposed TS-OC yields a feasible dynamic control scheme for~\eqref{eq.prob}, which is asymptotically near-optimal in the sense that
\[
    \Phi^{opt} \leq \lim_{N\rightarrow \infty} \frac{1}{NT} \sum_{t=0}^{NT-1} \sum_{i \in {\cal I}} \mathbb{E} \{\Phi_i^*(t)\} \leq \Phi^{opt} + \frac{M}{V}
\]
where $M := M_1+M_2+M_3$, as specified in Proposition \ref{Prop. 2}.
\end{theorem}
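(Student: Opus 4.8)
The plan is to obtain the claimed sandwich bound by combining the two preceding propositions, since all of the analytic work has already been carried out there; the theorem is essentially a synthesis step. I would split the argument into the lower bound $\Phi^{opt} \leq \lim_{N\to\infty}\frac{1}{NT}\sum_{t=0}^{NT-1}\sum_{i\in\mathcal{I}}\mathbb{E}\{\Phi_i^*(t)\}$ and the upper bound, which exceeds $\Phi^{opt}$ by at most $M/V$.

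For the lower bound, the key observation is that Proposition 1 guarantees $C^{\min}\leq C_i(t)\leq C^{\max}$ for all $i,t$ along every sample path, under conditions (\ref{eq.A1})--(\ref{eq.GV}). Consequently the control actions $\{E_i^*[n], P_i^*(t), P_{b,i}^*(t), C_i(t), \mathbf{w}_k^*(t)\}$ produced by TS-OC satisfy all constraints of the original problem (\ref{eq.prob}): the SINR, power, (dis)charge and balance constraints (\ref{eq.sinr}), (\ref{eq.pg}), (\ref{eq.pb}), (\ref{eq.balance}) are enforced by construction through the per-slot subproblem (\ref{eq.rt-prob}), while the battery dynamics together with the box constraint (\ref{eq.Ci}) are precisely what Proposition 1 certifies. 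Hence the TS-OC policy is a feasible point of (\ref{eq.prob}). Since $\Phi^{opt}$ is by definition the minimum time-averaged cost over all feasible policies, the cost achieved by any feasible policy, and in particular by TS-OC, cannot fall below $\Phi^{opt}$, which establishes the lower bound.

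For the upper bound, I would invoke Proposition 2 verbatim: under the same conditions and the i.i.d. assumption, it asserts exactly $\lim_{N\to\infty}\frac{1}{NT}\sum_{t=0}^{NT-1}\sum_{i\in\mathcal{I}}\mathbb{E}\{\Phi_i^*(t)\} \leq \Phi^{opt} + (M_1+M_2+M_3)/V$, so setting $M := M_1+M_2+M_3$ closes it. The feasibility assertion of the theorem is just a restatement of Proposition 1. Chaining the two bounds yields the sandwich, and asymptotic near-optimality follows because the gap $M/V\to 0$ as $V\to\infty$.

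The proof itself presents essentially no obstacle, since the two hard components are already discharged in Appendices A and B. The only point demanding care is the logical direction of the lower bound: one must confirm that feasibility of the TS-OC trajectory is a per-sample-path, constraint-by-constraint property, so that the realized trajectory is genuinely admissible for (\ref{eq.prob}), rather than merely a statement about expectations. Proposition 1, being a sample-path result (cf.\ the remark following it), supplies exactly this guarantee, so no gap remains. I would close by recalling the attendant trade-off: enlarging $V$ shrinks the optimality gap, but through (\ref{eq.GV3}) and the admissible range for $\Gamma$ it simultaneously demands a wider SoC operating range, so $V$ cannot be taken arbitrarily large for a battery of fixed capacity.
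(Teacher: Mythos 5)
Your proposal is correct and follows exactly the route the paper intends: the theorem is presented as an immediate corollary of Propositions 1 and 2, with the lower bound following from the sample-path feasibility of the TS-OC trajectory (hence its cost cannot undercut $\Phi^{opt}$) and the upper bound being Proposition 2 verbatim. Your extra care about the lower bound requiring per-sample-path feasibility rather than a statement in expectation is exactly the right point to flag, and it is supplied by the remark following Proposition 1.
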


The asymptotic behavior of the proposed dynamic approach is more complicated than that of existing alternatives due to the nature of multi-scale scheduling and battery imperfections. Interesting comments on the minimum optimality gap with the TS-OC are now in order.
\begin{enumerate}
	\item [1)] When $\eta=1$ (perfect battery), the optimality gap between the TS-OC and the offline optimal scheduling reduces to
	\begin{equation*}
		\frac{M}{V}=\frac{M_1+M_2}{V}=\frac{IT}{2V}\max\{(P_b^{\min})^2, (P_b^{\max})^2\}.
	\end{equation*}
The typical tradeoff from the stochastic network optimization holds in this case \cite{Nee10}: an $\mathcal{O}(V)$ battery size is necessary, when an $\mathcal{O}(1/V)$ close-to-optimal cost is achieved. Clearly, the minimum optimality gap is given by $M/V^{\max}$, which vanishes as $V^{\max} \rightarrow \infty$. By \eqref{eq.GV3}, such an asymptotic optimality can be achieved when we have very small price difference $(\bar{\alpha}-\underline{\beta})$, or very large battery capacities $C^{\max}$.
	\item [2)] When $\eta\in(0,1)$, the constants $M_1$, $M_2$ and $M_3$ are in fact functions of $\Gamma$, whereas the minimum and maximum values of $\Gamma$ also depend on $V$ [cf. (\ref{eq.GV1})--(\ref{eq.GV2})], thus the typical tradeoff in the case 1) is no longer correct.
For a given $V^{\max}$, the minimum optimality gap, $G^{\min}(V^{\max})$, can be obtained by solving the following problem:
\begin{equation}\label{eq.gap}
     \min_{(V, \Gamma)} ~\frac{M}{V}\!=\!\frac{M_1(\Gamma)}{V}\!+\!\frac{M_2(\Gamma)}{V}\!+\!\frac{M_3(\Gamma)}{V}, ~\text{s. t. } ~(\ref{eq.GV}).
\end{equation}
For $V \geq 0$, we know that the quadratic-over-linear functions $\frac{[(1-\eta) \Gamma + P_b^{\min}]^2}{V}$ and $\frac{[(1-\eta) \Gamma + P_b^{\max}]^2}{V}$ are jointly convex in $V$ and $\Gamma$ \cite{convex}. As a point-wise maximum of these two convex functions, $\frac{M_B(\Gamma)}{V}$ is also convex \cite{convex}. Then $\frac{M_1(\Gamma)}{V}$ and $\frac{M_2(\Gamma)}{V}$ are clearly convex by \eqref{eq.M1}-\eqref{eq.M2}; and likewise for $\frac{M_3(\Gamma)}{V}$. Since the objective is convex and the constraints in (\ref{eq.GV}) are linear, problem (\ref{eq.gap}) is a convex program which can be efficiently solved by general interior-point methods. Note that $G^{\min}(V^{\max})$ no longer monotonically decreases with respect to $V^{\max}$ (or $C^{\max}$); 
see also \cite{Qin15}. This makes sense intuitively because for a large battery capacity, the impact of using inaccurate queue lengths (battery SoC) and the dissipation loss due to battery imperfections will also be enlarged. 
The smallest possible optimality gap can be numerically computed by one dimensional search over $G^{\min}(V^{\max})$ with respect to $V^{\max}$.

\end{enumerate}

\section{Numerical Tests}\label{sec:test}

In this section, simulated tests are presented to evaluate our proposed TS-OC algorithm, and justify the analytical claims in Section~IV.

\subsection{Experiment Setup}
The considered CoMP network includes $I=2$ BSs each with $M=2$ transmit antennas, and $K=3$ mobile users. The system bandwidth is 1~MHz, and each element in channel vectors $\boldsymbol{h}_{ik,t}, \forall i,k,t$, is a zero-mean complex-Gaussian random variable with unit variance. Each coarse-grained interval consists of $T=5$ time slots. The limits of $P_{g,i}$, $P_{b,i}$ and $C_i$, as well as the values of the initial SoC $C_i(0)$ and $P_c$ are listed in Table~I. The battery storage efficiency is $\eta=0.95$.
The ahead-of-time and real-time energy purchase prices $\alpha_n^{{\rm lt}}$ and $\alpha_t^{\rm rt}$ are generated from folded normal distributions, with $\mathbb{E}\{\alpha_n^{{\rm lt}}\}=1.15$ and $\mathbb{E}\{\alpha_t^{\rm rt}\}=2.3$. The selling prices are set as $\beta_n^{{\rm lt}}=0.9\times \alpha_n^{{\rm lt}}$ and $\beta_t^{{\rm rt}}=0.3 \times\alpha_t^{\rm rt}$.
The harvested energy $A_{i,n}$ is also generated from a folded normal distribution. Finally, the Lyapunov control parameter $V$ is chosen as $V=V^{\max}$.
The proposed TS-OC algorithm is compared with three baseline schemes to benchmark its performance.
ALG~1 is a one-scale scheme without ahead-of-time energy planning;
ALG~2 performs two-scale online control without leveraging the renewable energy or energy storage devices; and the offline benchmark is an ideal scheme with a-priori knowledge of  future channel states, energy prices and RES arrival realizations.
\begin{table}[t]\addtolength{\tabcolsep}{1pt}
\centering
\caption{Parameter Values. All units are kWh.}\label{tab:param}
    \begin{tabular}{ c|c|c|c|c|c|c }
    \hline
$P_c$ &$P_{g}^{\max}$ &$P_{b}^{\min}$   &$P_{b}^{\max}$ &$C^{\min}$ &$C^{\max}$  &$C_i(0)$ \\ \hline
10    & 50        & -2           & 2  	  & 0    & 80         &0 \\
\hline
    \end{tabular}
    \vspace{-0.2cm}
\end{table}

\subsection{Numerical Results}

Fig.~\ref{fig: Avecost} shows the running-average transaction costs of the proposed algorithm, ALGs~1-2, as well as the offline benchmark. It is seen that within 500 time slots, the proposed approach converges the closest to the lower bound, while ALGs 1-2 incur about 71\% and 31\% larger costs than the proposed one.
However, note that the optimal offline counterpart cannot work in practice due to the lack of future. In addition, the optimality gap can be reduced as the battery efficiency $\eta$ approaches 1.
Among online schemes, the TS-OC algorithm intelligently takes advantage of the ahead-of-time energy planning, and the renewable energy and batteries, to hedge against future potential high energy cost, while ALGs~1-2 have to purchase much more expensive energy from the real-time energy market and result in a higher transaction cost.

The theoretical optimality-gap [cf. \eqref{eq.gap}] between the TS-OC and the offline optimal scheduling is depicted in Fig.~\ref{fig: Gap} under different battery capacities $C^{\max}$.
As analyzed after Theorem~1, the optimality-gap $M/V$ for $\eta=1$ diminishes as $C^{\max}$ (or $V^{\max}$) grows; whereas the gaps for $\eta=0.9$ and $\eta=0.95$ are no longer monotonically decreasing. Specifically, both of them first decrease and then increase, reaching the lowest points (where the optimality gaps are minimized) at $C^{\max}=40$~kWh and $C^{\max}=55$~kWh, respectively.
As expected, the gap for the worst storage efficiency $\eta=0.9$ remains the largest across the entire spectrum of battery capacity. 

In Fig.~\ref{fig: Avecosteta}, the average transaction cost of the TS-OC is compared under different battery efficiencies $\eta=0.9, 0.95, 1$. Clearly, the average costs monotonically decrease as $C^{\max}$ grows. The BSs with imperfect batteries ($\eta=0.9, 0.95$) require larger budgets for energy purchase than the ones with perfect batteries ($\eta=1$), thus compensating for the battery degeneration losses. In particular, when $C^{\max}=120$~kWh, the costs for $\eta=0.9$ and $\eta=0.95$ are 41.8\% and 33.8\% larger than that of the perfect battery case, respectively. 

The evolutions of battery SoC $C_1(t)$ with different storage efficiencies $\eta$ are compared in Fig.~\ref{fig: Ci}.
Clearly, all the three lines fluctuate within the feasible region; i.e., $C^{\min}\leq C_1(t) \leq C^{\max}$.
Among the three cases, the battery with $\eta=1$ maintains the highest energy level, followed by those with $\eta=0.95$ and $\eta=0.9$.
Intuitively speaking, keeping a high energy level in an imperfect battery results in much higher energy dissipation losses.
As a  result, the TS-OC algorithm tends to maintain a low energy level in such cases (e.g., around 30 kWh for $\eta=0.9$) to reduce average energy loss, and (dis)charge the battery less frequently.

\begin{figure}[t]
\centering
\includegraphics[width=0.5\textwidth]{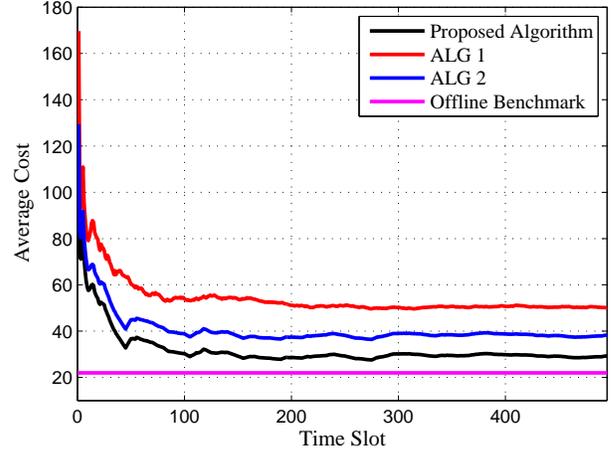}
\vspace{-0.5cm}
\caption{Comparison of average transaction cost.}
\label{fig: Avecost}
\vspace{-0.2cm}
\end{figure}

\begin{figure}[t]
\centering
\includegraphics[width=0.5\textwidth]{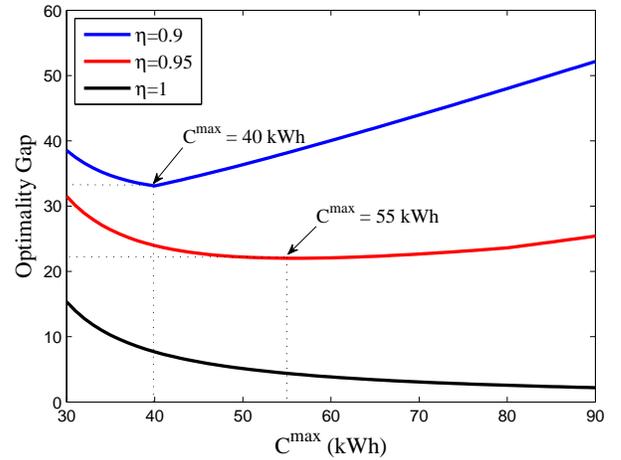}
\vspace{-0.5cm}
\caption{Optimality-gap versus battery capacity $C^{\max}$.}
\label{fig: Gap}
\vspace{-0.2cm}
\end{figure}

\begin{figure}[t]
\centering
\includegraphics[width=0.5\textwidth]{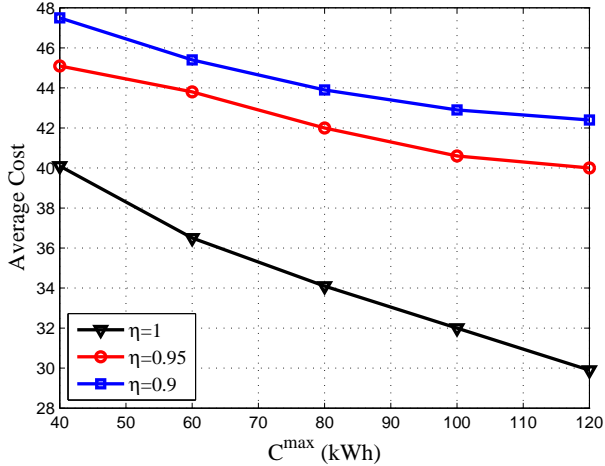}
\vspace{-0.5cm}
\caption{Average transaction cost versus battery capacity $C^{\max}$.}
\label{fig: Avecosteta}
\vspace{-0.2cm}
\end{figure}

\begin{figure}[t]
\centering
\includegraphics[width=0.5\textwidth]{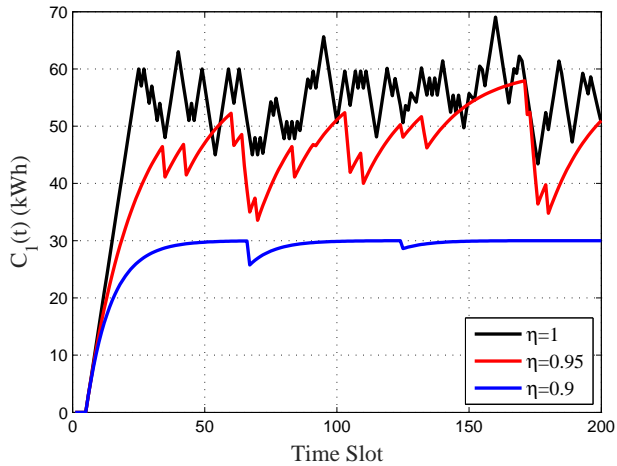}
\vspace{-0.5cm}
\caption{TS-OC based schedule of the battery SoC $C_1(t)$.}
\label{fig: Ci}
\vspace{-0.2cm}
\end{figure}

The previous remarks are further substantiated by Fig.~\ref{fig: optLag}, where the instantaneous discharging price, or, the ``stochastic'' Lagrange multiplier $-\frac{Q_1(t)}{V}$ is compared with the running-average purchase and selling prices $\bar{\alpha}_t^{{\rm rt}}:=(1/t)\sum_{\tau=1}^t \alpha_{\tau}^{{\rm rt}}$ and $\bar{\beta}_t^{{\rm rt}}:=(1/t)\sum_{\tau=1}^t \beta_{\tau}^{{\rm rt}}$.
It is interesting to observe that with a perfect battery ($\eta=1$), the instantaneous discharging price $-\frac{Q_1(t)}{V}$ is hovering between the average purchase and selling prices, which features a frequent (dis)charging operation. For $\eta=0.95$ or $\eta=0.9$, $-\frac{Q_1(t)}{V}$ is relatively high compared to the average purchase and selling prices, which discourages frequent (dis)charging; see also Fig. \ref{fig: Ci}. Note that the evolution of $-\frac{Q_1(t)}{V}$ can be further linked to the standard results from sensitivity analysis, which implies that the subdifferential of the objective $\lim_{N\rightarrow \infty} \frac{1}{NT} \sum_{t=0}^{NT-1} \sum_i \mathbb{E} \{\Phi_i(t)\}$ with respect to $P_{b,i}(t)$ (the convex hull of average purchase and selling prices) coincides with the negative of the optimal dual variable corresponding to \eqref{relax1} \cite{convex}.
Building upon this claim, the asymptotic optimality can be easily verified for $\eta=1$ since the ``stochastic'' Lagrange multiplier $-\frac{Q_1(t)}{V}$ converges to a neighborhood of the optimal dual variable; and a large optimality gap is also as expected for the imperfect batteries $\eta<1$ due to the distance between $-\frac{Q_1(t)}{V}$ and the average purchase and selling prices.

\begin{figure}[t]
\centering
\includegraphics[width=0.5\textwidth]{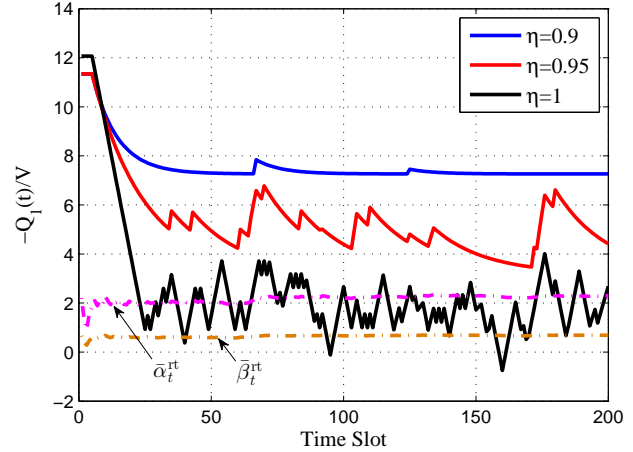}
\vspace{-0.5cm}
\caption{The evolution of $-Q_1(t)/V$ and running-average of energy prices.}
\label{fig: optLag}
\vspace{-0.2cm}
\end{figure}

\begin{figure}[t]
\centering
\includegraphics[width=0.5\textwidth]{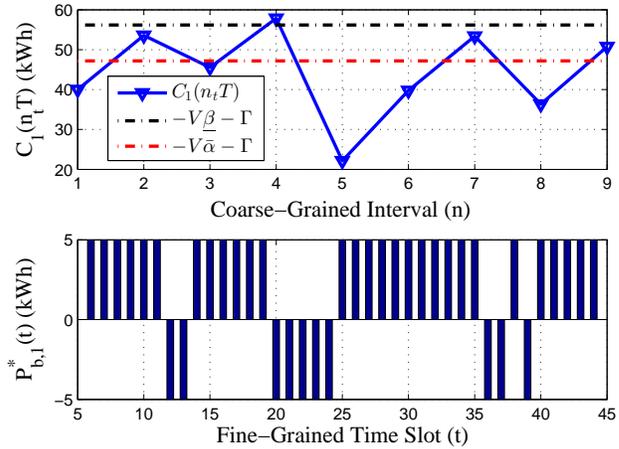}
\vspace{-0.5cm}
\caption{TS-OC based schedule of the battery SoC $C_1(n_tT)$ and battery (dis)charging actions $P_{b,1}^*(t)$, where $P_b^{\max} = 5$~kWh and $P_b^{\min} = -5$~kWh.}
\label{fig: Pb}
\vspace{-0.2cm}
\end{figure}

Taking a deeper look, the battery SoC $C_1(n_tT)$ and the real-time battery (dis)charging $P^*_{b,1}(t)$ are jointly depicted in Fig.~\ref{fig: Pb} to reveal the (dis)charging characteristics stated in Lemma~1. It can be observed that the TS-OC dictates the full discharge $P^*_{b,1}(t)=P_b^{\min}$ in the incoming 5 fine-grained slots $t \in [20,24]$ when $C_1(n_tT) > -V\underline{\beta} -\Gamma$ at $n = 4$, while the battery is fully charged $P^*_{b,1}(t)=P_b^{\max}$ when $C_1(n_tT) < -V\bar{\alpha} -\Gamma$ at $n = 1, 3, 5, 6, 8$.
In addition, when $C_1(n_tT)\in [-V\bar{\alpha} -\Gamma, -V\underline{\beta} -\Gamma]$ at $n = 2, 7$, $P^*_{b,1}(t)$ must be obtained by solving \eqref{eq.rt-prob1} numerically. 

\begin{figure}[t]
\centering
\includegraphics[width=0.5\textwidth]{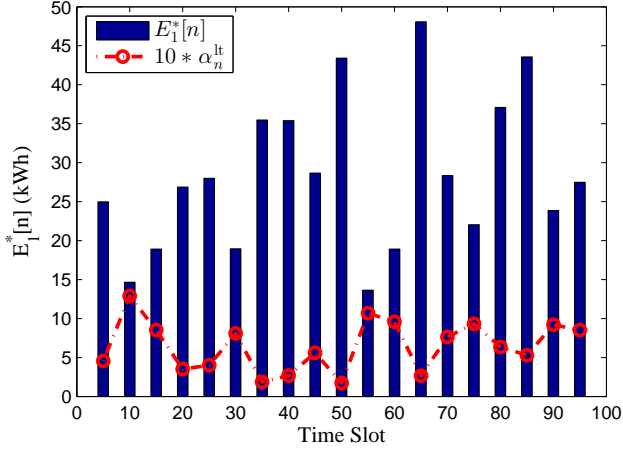}
\vspace{-0.5cm}
\caption{TS-OC based schedule of the optimal energy planning $E^*_1[n]$.}
\label{fig: Ei}
\vspace{-0.2cm}
\end{figure}

Fig.~\ref{fig: Ei} shows the optimal energy planning $E^*_1[n]$ over a 100-slot period, along with the fluctuating ahead-of-time energy purchase prices $\alpha_n^{{\rm lt}}$ for the resultant online policy.
One observation is that the ahead-of-time energy purchase $E^*_1[n]$ highly depends on the long-term price $\alpha_n^{{\rm lt}}$.
Specifically, the proposed scheme tends to request more energy for future $T$ slots when $\alpha_n^{{\rm lt}}$ is lower (e.g., $n=10,13,17$), and tends to purchase less energy when $\alpha_n^{{\rm lt}}$ is higher (e.g., $n=2,11$).

%

\section{Conclusions}

A two-scale dynamic resource allocation task was considered for RES-integrated CoMP transmissions. Taking into account the variability of channels, RES and ahead-of-time/real-time electricity prices, as well as battery imperfections, a stochastic optimization problem was formulated to minimize the long-term average energy transaction cost subject to the QoS requirements. Capitalizing on the Lyapunov optimization technique and the stochastic subgradient iteration, a two-scale online algorithm was developed to make control decisions `on-the-fly.' It was analytically established that the novel approach yields feasible and asymptotically near-optimal resource schedules without knowing any statistics of the underlying stochastic processes. Simulated tests confirmed the merits of the proposed approach and highlighted the effect of battery imperfections on the proposed online scheme.
This novel two-scale optimization framework opens up some interesting research directions, which include incorporating the power network constraints and/or transmission losses in the formulation, pursuing a fast convergent approach by learning from historical system statistics, and reducing the battery size leveraging the so-called predictive scheduling.

\appendix
\subsection{Proof of Proposition 1}

The proof proceeds by induction. First, set $C_i(0) \in [C^{\min}, C^{\max}]$, $\forall i$, and suppose that this holds for all $C_i(nT)$ at slot $nT$. We will show the bounds hold for $C_i(t)$, $\forall t =nT+1, \ldots, (n+1)T$, as well as in subsequent instances.

By $C_i(t+1)=\eta C_i(t) +P_{b,i}^*(t)$, we have
\begin{align}\label{eq.Cit}
    C_i(t) = \eta^{t-nT} C_i(nT) + \sum_{\tau=nT}^{t-1} [\eta^{t-1-\tau} P_{b,i}^*(\tau)],~~~ \nonumber\\
     \forall t =nT+1, \ldots, (n+1)T.
\end{align}

Note that by the definitions of $\Gamma^{\min}$ and $\Gamma^{\max}$ in (\ref{eq.GV1})-(\ref{eq.GV2}), we have $C^{\min} \leq -V \bar{\alpha} -\Gamma < -V \underline{\beta} -\Gamma \leq C^{\max}$. We then consider the following three cases.
\begin{itemize}
\item[c1)] If $C_i(nT) \in [C^{\min}, -V \bar{\alpha} -\Gamma)$, then Lemma~1 implies that $P_{b,i}^*(t) = P_{b}^{\max}$, $\forall t = nT, \ldots, (n+1)T-1$. From (\ref{eq.Cit}), we have, $\forall t=nT+1, \ldots, (n+1)T$,
    \begin{itemize}
    \item[i)] $C_i(t) \geq \eta^{t-nT} C^{\min} + \frac{1-\eta^{t-nT}}{1-\eta} P_{b}^{\max} \geq C^{\min}$, due to the condition (\ref{eq.A1});
    \item[ii)] $C_i(t) \leq \eta^{t-nT} (-V \bar{\alpha} -\Gamma) + \frac{1-\eta^{t-nT}}{1-\eta} P_{b}^{\max} \leq \eta^{t-nT}(-V\underline{\beta} -\Gamma) + \frac{1-\eta^{t-nT}}{1-\eta} P_{b}^{\max} \leq C^{\max}$, due to $\underline{\beta} < \bar{\alpha}$, $\Gamma \geq \Gamma^{\min}$, and the definition of $\Gamma^{\min}$ in (\ref{eq.GV1}).
    \end{itemize}

\item[c2)] If $C_i(nT) \in [-V \bar{\alpha} -\Gamma, -V \underline{\beta} -\Gamma]$, then $P_{b,i}^*(t) \in [P_b^{\min}, P_{b}^{\max}]$. We have, $\forall t=nT+1, \ldots, (n+1)T$,
    \begin{itemize}
    \item[i)] $C_i(t) \geq \eta^{t-nT} (-V \bar{\alpha} -\Gamma) + \frac{1-\eta^{t-nT}}{1-\eta} P_{b}^{\min} \geq C^{\min}$, due to $\Gamma \leq \Gamma^{\max}$ and the definition of $\Gamma^{\max}$ in (\ref{eq.GV2});
    \item[ii)] $C_i(t) \leq \eta^{t-nT} (-V \underline{\beta} -\Gamma) + \frac{1-\eta^{t-nT}}{1-\eta} P_{b}^{\max} \leq C^{\max}$, as with c1-ii); and
    \end{itemize}

\item[c3)] If $C_i(nT) \in (-V \underline{\beta} -\Gamma, C^{\max}]$, it follows from Lemma~1 that $P_{b,i}^*(t) = P_{b}^{\min}$, $\forall t = nT, \ldots, (n+1)T-1$. We have, $\forall t=nT+1, \ldots, (n+1)T$
    \begin{itemize}
    \item[i)]  $C_i(t) \geq \eta^{t-nT} (-V \underline{\beta} -\Gamma) + \frac{1-\eta^{t-nT}}{1-\eta} P_{b}^{\min} \geq \eta^{t-nT} (-V \bar{\alpha} -\Gamma) + \frac{1-\eta^{t-nT}}{1-\eta} P_{b}^{\min} \geq C^{\min}$, due to $\underline{\beta} < \bar{\alpha}$ and c2-i);
    \item[ii)] $C_i(t) \leq \eta^{t-nT} C^{\max} + \frac{1-\eta^{t-nT}}{1-\eta} P_{b}^{\min} \leq C^{\max}$, due to $\eta \leq 1$, and $P_{b}^{\min} < 0$.
    \end{itemize}
\end{itemize}

Cases c1)--c3) together prove the proposition.

\subsection{Proof of Proposition 2}
The evolution of $Q_i(t)$ in the TS-OC is given by $Q_i(t+1)=C_i(t+1)+\Gamma = \eta C_i(t) + P_{b,i}^*(t) +\Gamma = \eta Q_i(t) + (1-\eta) \Gamma + P_{b,i}^*(t)$. Hence, we have
\begin{align}
    & [Q_i(t+1)]^2 =[\eta Q_i(t) + (1-\eta) \Gamma + P_{b,i}^*(t)]^2 \notag \\
    & ~= \eta^2 [Q_i(t)]^2 + 2 \eta Q_i(t)[(1-\eta) \Gamma + P_{b,i}^*(t)] \notag \\
    & ~~~~~ + [(1-\eta) \Gamma + P_{b,i}^*(t)]^2 \notag \\
    & ~\leq \eta^2 [Q_i(t)]^2 + 2 \eta Q_i(t)[(1-\eta) \Gamma + P_{b,i}^*(t)] \notag \\
    & ~~~~~ + \max\{[(1-\eta) \Gamma + P_b^{\min}]^2, [(1-\eta) \Gamma + P_b^{\max}]^2\} \notag
\end{align}
where the last inequality holds due to \eqref{eq.pb}.

With $\mathbf{Q}(t): = [Q_1(t),\ldots, Q_I(t)]'$, consider the Lyapunov function $L(\mathbf{Q}(t)):=\frac{1}{2} \sum_i [Q_i(t)]^2$. Using the short-hand notation $\boldsymbol{Q}[n]:= \boldsymbol{Q}(nT)$, it readily follows that
\begin{align}
    & \triangle_T (\mathbf{Q}[n]):=L(\mathbf{Q}[n+1])-L(\mathbf{Q}[n]) \notag \\
    & ~~ \leq -\frac{1}{2}(1-\eta^2)\sum_{t=nT}^{(n+1)T-1} \sum_{i\in \cal I} [Q_i(t)]^2+\frac{IT}{2} M_B \notag \\
    & ~~~~~~~~+ \sum_{t=nT}^{(n+1)T-1} \sum_{i\in \cal I} \{\eta Q_i(t)[(1-\eta) \Gamma + P_{b,i}^*(t)] \} \notag\\
    & ~~ \leq \frac{IT}{2} M_B + \sum_{t=nT}^{(n+1)T-1} \sum_{i\in \cal I}  \{\eta Q_i(t)[(1-\eta) \Gamma + P_{b,i}^*(t)] \}. \notag
\end{align}
Since $Q_i(t+1)=\eta Q_i(t) + (1-\eta) \Gamma + P_{b,i}^*(t)$ and $P_b^{\min} \leq P_{b,i}^*(t) \leq P_b^{\max}$, we have: $\forall t = nT, \ldots, (n+1)T-1$,
\begin{align*}
    \eta^{t-nT} Q_i[n] + \frac{1-\eta^{t-nT}}{1-\eta}[(1-\eta)\Gamma + P_b^{\min}] \leq Q_i(t) \\
    \leq \eta^{t-nT} Q_i[n] + \frac{1-\eta^{t-nT}}{1-\eta}[(1-\eta)\Gamma + P_b^{\max}].
    \end{align*}
This implies that $\forall t = nT, \ldots, (n+1)T-1$,
\begin{align*}
    & Q_i(t)[(1-\eta) \Gamma + P_{b,i}^*(t)] \leq \eta^{t-nT}Q_i[n][(1-\eta) \Gamma + P_{b,i}^*(t)]\\
    & + \frac{1-\eta^{t-nT}}{1-\eta} \max\{[(1-\eta) \Gamma + P_b^{\min}]^2, [(1-\eta) \Gamma + P_b^{\max}]^2\}.
\end{align*}
Consequently, it follows that
\begin{align*}
    & \triangle_T (\mathbf{Q}[n]) \leq  \frac{IT}{2} M_B + \sum_{t=nT}^{(n+1)T-1} \sum_{i\in \cal I} \{\frac{\eta(1-\eta^{t-nT})}{1-\eta} M_B \\
    & ~~~~+ \sum_{t=nT}^{(n+1)T-1} \sum_{i\in \cal I} \{\eta^{t-nT+1} Q_i[n][(1-\eta) \Gamma + P_{b,i}^*(t)] \} \\
    & ~\leq  \frac{IT}{2} M_B + \frac{I \eta[T(1-\eta)-(1-\eta^T)]}{(1-\eta)^2} M_B \\
    & ~~~~+ \sum_{t=nT}^{(n+1)T-1} \sum_{i\in \cal I} \{\eta^{t-nT+1} Q_i[n][(1-\eta) \Gamma + P_{b,i}^*(t)] \}.
\end{align*}
Taking expectations and adding $\sum_{t=nT}^{(n+1)T-1} \sum_i [\eta^{t-nT+1} \linebreak V \mathbb{E}\{\Phi_i^*(t)\}]$ to both sides, we arrive at (with short-hand notation $M_{\triangle}:=\frac{IT}{2} M_B + \frac{I\eta[T(1-\eta)-(1-\eta^T)]}{(1-\eta)^2} M_B$):
\begin{align}
    & \mathbb{E}\{\triangle_T (\mathbf{Q}[n])\} + \sum_{t=nT}^{(n+1)T-1} \eta^{t-nT+1} \sum_{i\in \cal I} [V \mathbb{E}\{\Phi_i^*(t)\}] \notag \\
    & \leq M_{\triangle}+ \sum_{t=nT}^{(n+1)T-1} [\eta^{t-nT+1} \sum_{i\in \cal I} Q_i[n](1-\eta) \Gamma] \notag \\
    & ~~~~+ \sum_{t=nT}^{(n+1)T-1}[ \eta^{t-nT+1}  \sum_{i\in \cal I} \mathbb{E}\{V\Phi_i^*(t)+Q_i[n]P_{b,i}^*(t)\}] \notag \\
    & = M_{\triangle}+ \sum_{t=nT}^{(n+1)T-1} [\eta^{t-nT+1} \sum_{i\in \cal I} Q_i[n](1-\eta) \Gamma] \notag \\
    & ~~~~+ \frac{\eta(1-\eta^T)}{(1-\eta)T} \sum_{t=nT}^{(n+1)T-1}[ \sum_{i\in \cal I} \mathbb{E}\{V\Phi_i^*(t)+Q_i[n]P_{b,i}^*(t)\}] \notag \\
    & \leq M_{\triangle}+ \sum_{t=nT}^{(n+1)T-1} [\eta^{t-nT+1} \sum_{i\in \cal I} Q_i[n](1-\eta) \Gamma] \notag \\
    & ~~~~+ \frac{\eta(1-\eta^T)}{(1-\eta)T} \sum_{t=nT}^{(n+1)T-1}[ \sum_{i\in \cal I} \mathbb{E}\{V\Phi_i^{stat}(t)+Q_i[n]P_{b,i}^{stat}(t)\}] \notag \\
    & = M_{\triangle} + \sum_{t=nT}^{(n+1)T-1} [ \eta^{t-nT+1} \sum_{i\in \cal I} \mathbb{E} \{V\Phi_i^{stat}(t)\} ]\notag \\
    & ~~~~+ \sum_{t=nT}^{(n+1)T-1} [\eta^{t-nT+1}Q_i[n]\{(1-\eta) \Gamma + P_{b,i}^{stat}(t)\}] \notag \\
    & \leq M_{\triangle} + I\eta (1-\eta^T) M_C+ \frac{\eta(1-\eta^T)}{1-\eta} V\tilde{\Phi}^{opt} \notag
\end{align}
where the two equalities hold since both $\sum_{i\in \cal I} \mathbb{E}\{V\Phi_i^*(t)+Q_i[n]P_{b,i}^*(t)\}$ for the TS-OC and $\sum_{i\in \cal I} \mathbb{E}\{V\Phi_i^{stat}(t)+Q_i[n]P_{b,i}^{stat}(t)\}$ for ${\cal P}^{stat}$ are in fact the same for slots $t=nT, \ldots, (n+1)T-1$, when $\boldsymbol{\xi}_t^{\rm rt}$ is i.i.d. over slots; the second inequality is because the TS-OC algorithm minimizes the third term $\sum_i \mathbb{E}\{V\Phi_i(t)+Q_i[n]P_{b,i}(t)\}]$ among all policies satisfying (\ref{eq.sinr}), (\ref{eq.pg}), (\ref{eq.pb}), and (\ref{eq.balance}), including ${\cal P}^{stat}$; and the last inequality is due to (\ref{eq.opt}) and $Q_i[n] \in [C^{\min}+\Gamma, C^{\max} + \Gamma]$ under conditions (\ref{eq.A1})--(\ref{eq.GV}) per Proposition 1.

Again, note that $\sum_i [V \mathbb{E}\{\Phi_i^*(t)\}]$ for the TS-OC is the same for slots $t=nT, \ldots, (n+1)T-1$, when $\boldsymbol{\xi}_t^{\rm rt}$ is i.i.d. over slots. Summing over all $n=1,2,\ldots$, we then have
\begin{align}
    & \sum_{n=0}^{N-1}\mathbb{E}\{\triangle_T (\mathbf{Q}[n])\} + \sum_{n=0}^{N-1} \sum_{t=nT}^{(n+1)T-1} \eta^{t-nT+1} \sum_{i\in \cal I} [V \mathbb{E}\{\Phi_i^*(t)\}] \notag \\
    & = \mathbb{E}[L(\mathbf{Q}[N])]-L(\mathbf{Q}[0]) + \frac{\eta(1-\eta^T)}{(1-\eta)T} \sum_{t=0}^{NT-1} \sum_{i\in \cal I} [V \mathbb{E}\{\Phi_i^*(t)\}] \notag \\
    & \leq N[M_{\triangle} + I\eta (1-\eta^T) M_C+ \frac{\eta(1-\eta^T)}{1-\eta} V\tilde{\Phi}^{opt}] \notag
\end{align}
which leads to
\begin{align}
    &\frac{1}{NT} \sum_{t=0}^{NT-1} \mathbb{E}[\sum_{i\in \cal I} \mathbb{E}\{\Phi_i^*(t)\}] \notag \\
    & ~~~ \leq \tilde{\Phi}^{opt} + \frac{M_1+M_2+M_3}{V} + \frac{(1-\eta)}{\eta(1-\eta^T)}\frac{L(\mathbf{Q}[0])}{NV} \notag \\
    &  ~~~ \leq \Phi^{opt} + \frac{M_1+M_2+M_3}{V} + \frac{(1-\eta)}{\eta(1-\eta^T)}\frac{L(\mathbf{Q}[0])}{NV} \notag
\end{align}
and the proposition follows by taking the limit as $N \rightarrow \infty$.

\end{document}